\newcommand{\changed}[1]{#1} %to deactivate the color
\newcommand{\set}[1]{\{#1\}}
\newcommand{\size}[1]{| #1 |}
\newcommand{\order}[1]{O\hspace{-0.06cm}\left(#1\right)}
\newtheorem{lemma}{\bf Lemma}
\newtheorem{observation}{Observation}
\newtheorem{problem}{\bf Problem}
\Crefname{algocf}{Algorithm}{Algorithms}
\newcommand{\ms}[1]{{\ifthenelse{\equal{#1}{}}{{ ms}}{{ ms}\xspace\left(#1\right)}}} %% minimum shiftable
\newcommand{\li}[1]{{\ifthenelse{\equal{#1}{}}{\ell}{\ell\xspace\left(#1\right)}}} %% largetst index
\newcommand{\posms}[1]{{\ifthenelse{\equal{#1}{}}{{ p}}{{ p}\xspace\left(#1\right)}}} %% position of ms
\newcommand{\comp}[1]{{\ifthenelse{\equal{#1}{}}{{\tt \mu}}{{\tt \mu}\xspace\left(#1\right)}}}
\newcommand{\mcc}[2]{{\ifthenelse{\equal{#1}{}}{{\tt mcc}}{{\tt mcc}\xspace\left(#1, #2\right)}}}
\newcommand{\minimize}[1]{{\ifthenelse{\equal{#1}{}}{{\Gamma}}{{\Gamma}\xspace\left(#1\right)}}}
\newcommand{\MCS}{\textsc{MCS Enumeration}}
\newcommand{\MCSAS}{\textsc{Another MCS}\xspace}
\newcommand{\MIS}{\textsc{MIS Enumeration}\xspace}
\newcommand{\MISC}{\textsc{MIS Counting}\xspace}
\newcommand{\MCSC}{\textsc{MCS Counting}\xspace}
\newcommand{\CAT}{\ensuremath{\mathsf{CAT}}\xspace}
\newcommand{\mS}{\ensuremath{\mathcal{S}}\xspace}
\newcommand{\isize}{\ensuremath{\|\mathcal{S}\|}\xspace}
\tikzstyle{vertex}=[circle, draw, inner sep=0pt, minimum size=6pt]
\title{The Complexity of Maximal Common Subsequence Enumeration}
\author{Giovanni Buzzega}
\affiliation{
\institution{University of Pisa}
\city{Pisa}
\country{Italy}}
\email{giovanni.buzzega@phd.unipi.it}
\author{Alessio Conte}
\affiliation{
\institution{University of Pisa}
\city{Pisa}
\country{Italy}}
\email{alessio.conte@unipi.it}
\author{Yasuaki Kobayashi}
\affiliation{
\institution{Hokkaido University}
\city{Sapporo}
\country{Japan}}
\email{koba@ist.hokudai.ac.jp}       
\author{Kazuhiro Kurita}
\affiliation{
\institution{Nagoya University}
\city{Nagoya}
\country{Japan}}
\email{kurita@i.nagoya-u.ac.jp} 
\author{Giulia Punzi}
\affiliation{
\institution{University of Pisa}
\city{Pisa}
\country{Italy}}
\email{giulia.punzi@unipi.it}
\begin{abstract}  
Frequent pattern mining is widely used to find ``important'' or ``interesting'' patterns in data.
While it is not easy to mathematically define such patterns, maximal frequent patterns are promising candidates, as frequency is a natural indicator of relevance and maximality helps to summarize the output. 
As such, their mining has been studied on various data types, including itemsets, graphs, and strings. 
The complexity of mining maximal frequent itemsets and subtrees has been thoroughly investigated (e.g., [Boros et al., 2003], [Uno et al., 2004]) in the literature.
On the other hand, while the idea of mining frequent subsequences in sequential data was already introduced in the seminal paper [Agrawal et al., 1995], the complexity of the problem is still open.

In this paper, we investigate the complexity of the maximal common subsequence enumeration problem, which is both an important special case of maximal frequent subsequence mining and a generalization of the classic longest common subsequence (LCS) problem.
We show the hardness of enumerating maximal common subsequences between multiple strings, ruling out the possibility of an \emph{output-polynomial time} enumeration algorithm under $\P \neq \NP$, that is, an algorithm that runs in time ${\rm poly}(|\mathcal I| + N)$, where $|\mathcal I|$ and $N$ are the size of the input and number of output solutions, respectively.
To circumvent this intractability, we also investigate the parameterized complexity of the problem, and show several results when the alphabet size, the number of strings, and the length of a string are taken into account as parameters.
\end{abstract}
\keywords{Maximal frequent subsequence mining, Maximal common subsequence enumeration, Output-sensitive complexity, \NP-hardness of enumeration problems}
\begin{document}
\maketitle
\todo[inline]{Importantly, the page limits for the camera-ready version will be different from the submission version: 18 pages (PACMMOD) + unlimited references + 6 pages for the appendix.}
\section{Introduction}

% \subsection{Enumeration and frequent pattern mining}
Frequent pattern mining problems are a central topic in data mining.
Mining algorithms typically aim to find ``interesting'' or ``important'' patterns, but mathematically defining an interesting or important pattern is not easy. 
A reasonable implementation of this concept, that has been widely studied, is using frequency as an indicator of importance, and enumerating all high-frequency patterns.
In particular, structures such as itemsets, graphs, and strings are often sought after~\cite{10.1145/170036.170072,1184038,10.1007/BFb0014140}.

The fundamental problem of \emph{frequent itemset mining}, for example, can be formulated as follows. 
Given a transaction database $\mathcal T = \set{T_1, \ldots, T_k}$ and a threshold $t$, 
enumerate all sets $V$ that are frequent, i.e., contained in at least $t$ transactions in $\mathcal T$.
When the database is a collection of graphs, the problem is called \emph{frequent subgraph mining}~\cite{DBLP:conf/pkdd/HorvathOR13,10.1145/2629550}, and when it is a collection of strings, it is called \emph{frequent subsequence mining}, where a \textit{subsequence} of a string $S$ is a string that can be obtained by deleting arbitrary occurrences of characters of $S$. 
In the frequent subsequence mining problem, each element of a sequence can sometimes be defined as a subset of a given itemset. In this paper, we consider the less general case in which each element of a sequence is a simple character.
% This is a restricted problem where the number of elements in each subset of the itemset is limited to one.

Frequency, however, is not the end of the story, as high-frequency patterns are not always all useful. 
For example, a frequent pattern that is simply a subset of another frequent pattern may not be considered useful, as its information is subsumed by the bigger pattern.
Thus, research in the field focused on additional constraints for pruning redundant information, such as closedness and maximality: a frequent pattern is called \emph{maximal} if no other frequent pattern strictly contains it, and it is called \emph{closed} if it is only contained in patterns with lower frequency than itself. Discarding non-maximal or non-closed patterns greatly reduces the number of patterns, with minimal loss of information, but comes at a computational cost, as enumerating just the maximal or closed patterns can be significantly more challenging. 

To analyze this cost, we typically have to look at the \textit{output} size: the number of solutions may be exponential in the size of the input, so we seek algorithms that run, at the very least, in time polynomial in the output size~\cite{Yannakakis81db}. We call these algorithms \emph{output-polynomial}, or \emph{output-sensitive}.

% mining problems with additional constraints, such as closedness and maximality, have been studied in frequent pattern mining.
% A closed frequent pattern is a pattern that is not contained in any other frequent pattern with the same frequency. 
% A maximal frequent pattern is a pattern that is not contained in any other frequent pattern.

For the itemset mining problem, the output-sensitive enumeration of maximal/closed frequent itemset has been thoroughly studied~\cite{DBLP:journals/amai/BorosGKM03,DBLP:conf/fimi/UnoKA04}. 
% See \Cref{subsec:output-sensitive} for the definition of output-sensitivity.
% One measure of the complexity of an enumeration algorithm is an output-sensitive measure~\cite{JOHNSON1988119}. 
% An enumeration algorithm is called an \emph{output-polynomial time algorithm} if the total running time of this algorithm can be bounded by $\order{n + N}$ time, where $n$ is an input size and $N$ is the number of solutions.
% Moreover, an algorithm that runs in $(n + N)^{\order{\log (n + N)}}$ time is called \emph{output-quasi-polynomial time algorithm}.
% Another measure of the complexity of enumeration algorithms is \emph{delay}.
% The delay is the time taken from outputting the 
% $i$-th solution to outputting the $i+1$-th solution.
For closed frequent itemset mining, Uno et al. ~\cite{DBLP:conf/fimi/UnoKA04} proposed a linear-delay enumeration algorithm, where the delay is the maximum time elapsed between outputs.
In contrast, the enumeration of maximal frequent itemsets is known to be intractable~\cite{DBLP:journals/amai/BorosGKM03}.
More precisely, Boros et al.\ showed that the decision problem of determining, given a set of maximal frequent itemsets, whether there exists another one or not, is \NP-complete.
This hardness result implies that no output-polynomial time algorithm for enumerating maximal frequent itemsets can exist unless $\P = \NP$.

In the case of subgraph mining, computing frequency often leads
to \NP-complete graph problems, e.g., maximum clique, subgraph isomorphism, Hamiltonian path, and so on~\cite{DBLP:conf/pkdd/HorvathOR13,HorvathBR06}.
Moreover, the maximal/closed frequent subgraph mining is intractable even if the graphs are two trees~\cite{10.1145/2629550}.
Thus, the maximal/closed frequent subgraph mining is hard even when frequency can be computed in polynomial time.

As for maximal/closed frequent \textit{subsequence} mining, due to its countless applications in domains such as text analysis or biological data, many practical algorithms have been developed~\cite{DBLP:conf/ijcai/Wang07,Alessio:ACI:2023,Fumarola:KIS:2015,WU2020105812,10.1145/3011141.3011160}. Surprisingly, however, there are still no results on its theoretical complexity. 

In this paper, we seek to characterize this complexity, and to do so we start from the problem of enumerating maximal common subsequences, hereafter \MCS{}. For a set of strings $\mathcal S = \set{S_1, \ldots, S_k}$ and a string $S$,
the \emph{frequency of} $S$ is the number of strings in $\mathcal S$ that contain $S$ as a subsequence. 
%Maximal common subsequence enumeration, hereafter 
\MCS~asks for every string $S$ whose frequency is exactly $k$, i.e., is common to all strings in $\mS$, and that is maximal in the sense that $S$ is not a subsequence of any other common subsequence of $\mS$. 

MCSs are a more general variant of the classic longest common subsequence (LCS) problem, as LCSs are simply MCSs of maximum length, and their complexity was thoroughly investigated~\cite{AbboudLCShardness}.
In turn, MCSs are a special case of the \textit{closed} frequent subsequences, that are all the strings $S$ whose superstrings have lower frequency then $S$, and its complexity remains open.

The main result of this paper is a negative one, as we prove that no efficient enumeration algorithm exists for \MCS~unless $\P = \NP$. Moreover, we investigate restricted instances of the problem, as well as how this complexity affects related problems such as counting, assessment and indexing. A summary of the problems studied and our results is given in Section \ref{section:contribution}.

\subsection{Enumeration, counting, assessment, and indexing} 
Enumeration problems are related to, but still fundamentally different from, counting problems. 
Both problems concern a specific pattern of which we wish to identify occurrences in a given input instance. The counting problem is tasked with simply returning \emph{the number} of such occurrences, while the enumeration problem has the aim of \emph{outputting} all occurrences of the pattern. 
% Enumeration is useful in a lot of different settings, for instance in the maximal frequent itemset/subsequence/subgraph problems (REFERENCES).
When addressing an enumeration problem, the total number of solutions can be exponentially larger than the input size, and therefore, simply outputting the solution requires exponential time~\cite{DBLP:conf/kdd/Yang04}.
% : fast enumeration is hard in practice. 
This is also the case for the problem of enumerating MCSs, which is the focus of this paper. 
To address this inherent problem of exponential complexity, we can take two different approaches. % to address the problem of exponentially large outputs.

% One possibility is to simply compute the number of solutions instead of outputting all of them, that is, to solve the corresponding counting problem. 
Indeed, while it is clear that enumeration also provides an answer to the counting problem, there are cases where counting can be performed faster. Even if enumeration intrinsically provides more information than counting, in some data mining applications it is sometimes sufficient to know the number of solutions, instead of finding them all~\cite{punzi2023paths}.
Unfortunately, counting is also often hard to perform efficiently. For instance, both the problem of counting all frequent itemsets~\cite{DBLP:journals/tods/GunopulosKMSTS03} and the aforementioned maximal frequent itemsets/subsequences/subgraphs problems~\cite{DBLP:conf/kdd/Yang04} have been shown to be \#\P-complete, meaning that no polynomial-time counting algorithm exists, unless $\P = \NP$.  
% \todo[inline]{maybe name here MIS as hard, in view of the future?}
% The complexity of the corresponding counting problem has also been studied for the aforementioned maximal frequent itemsets/subsequences/subgraphs problems: unfortunately, all such problems are known to be \#\P-complete~\cite{DBLP:conf/kdd/Yang04}. This means that no polynomial-time counting algorithm exists, unless $\P = \NP$. 
% Moreover, counting all frequent items is also \#\P-complete~\cite{DBLP:journals/tods/GunopulosKMSTS03}.
There is also a further intermediate step between counting and enumeration, called assessment: the problem of determining whether there are more solutions than a given integer threshold $z$. 
When $z$ is not too large, this type of problem can be efficiently solved by using an enumeration algorithm. However, there are several problems where assessment is $\NP$-hard even when the threshold $z$ is small~\cite{DBLP:journals/ipl/BrosseDKLUW24,DBLP:journals/dam/BorosM24,DBLP:journals/algorithmica/KhachiyanBEG08}. 
For example, it is known that the problem of determining whether there are more maximal frequent itemsets than a given integer $z$ is \NP-complete~\cite{DBLP:journals/amai/BorosGKM03}.
% The complexity of this type of problem is known in subgraph counting problems~\cite{DBLP:journals/ipl/BrosseDKLUW24,DBLP:journals/dam/BorosM24,DBLP:journals/algorithmica/KhachiyanBEG08}

Another alternative approach to circumvent the exponential explosion of direct enumeration is to compute an index that stores the solutions~\cite{10.1007/978-3-540-68125-0_22,Minato2010,DBLP:conf/wabi/BuzzegaCGP24,10.1145/3011141.3011160,10.1007/BFb0030837}, which can be seen as an ``implicit'' form of enumeration.
% As the computation time for enumeration depends on the number of outputs, 
% there is a case for counting the number of outputs to estimate the computation time.
% Any enumeration algorithm requires a computation time that depends on the number of solutions.
% However, the number of solutions can be exponentially larger than the input size. 
% Enumeration is practically difficult in such cases, no matter how good the algorithm is.
% As a method of enumeration independent of the number of solutions, 
% solution indexing exists~\cite{10.1007/978-3-540-68125-0_22,Minato2010,DBLP:conf/wabi/BuzzegaCGP24,10.1145/3011141.3011160,10.1007/BFb0030837}.
% PAPER THAT MENTIONS RANDOM ORDER ENUMERATION AS USEFUL THING YOU CAN O WITH AN INDEX \cite{carmeli2022answering} %Random access, in general, can be seen as an efficient way of accessing the query answers as if they were already computed and stored in an array.
This approach aims to construct a small index, defining operations to efficiently handle several useful queries, such as membership queries, random access and (random order) enumeration of solutions \cite{carmeli2022answering}.
Such an index can take the form of one of several data structures, like labeled Directed Acyclic Graphs (DAGs), variants of decision diagrams, and $d$-DNNF~\cite{DBLP:journals/jair/DarwicheM02,info:hdl/2115/58738,SakaiArxiv2023}.
If an enumeration problem admits a small index that can be quickly computed, then the increase in computation time due to output size can be disregarded as we can still perform practically fast enumeration. %is insignificant.
For our problem of \MCS{}, several such indices are known when the input strings are two: two of them of provably polynomial size~\cite{conte2023isaac,SakaiArxiv2023}, and more recently a practically efficient one, with no worst-case guarantees~\cite{DBLP:conf/wabi/BuzzegaCGP24}.  
% in the form of a polynomial-size acyclic DFA~\cite{conte2023isaac}. Moreover, Buzzega et al. subsequently provided a practically fast algorithm to construct this index~\cite{DBLP:conf/wabi/BuzzegaCGP24}.
It is natural to ask whether these results can be generalized to an index for $k$ strings.

% Such a practically fast enumeration approach requires a small index to store the solution and compute it quickly.
% With this approach, if a small index exists and can be computed efficiently, it would enable practically fast enumeration.
% However, our \NP-hardness results imply that computing such an index is difficult even if a small index exists.

% There are indexes other than an acyclic DFA that store strings, i.e., a trie, SeqBDD/ZDD.
% The \NP-hardness of \MCSAS{} also shows that 
% constructing such indexes cannot be done in output-polynomial time unless $\P = \NP$.
% We can show the \NP-hardness of index construction if the two properties hold.
% One is that the size of an index is smaller than the total length of stored strings.
% The other one is we can enumerate all strings stored in an index in output polynomial time.
% If the above two properties hold, an index that stores all maximal common subsequences cannot be constructed in output-polynomial time unless $\P = \NP$.

\paragraph{\bf The complexity of enumeration algorithms and related problems. }
\label{subsec:output-sensitive}
As mentioned above, the complexity of enumeration is usually expressed with an output-sensitive analysis~\cite{JOHNSON1988119}.
Consider an instance $\mathcal I$ of an enumeration problem, of size $|\mathcal I|$, and let $N$ be the number of solutions to output. % size and number of solutions of the instance, respectively.
%An enumeration algorithm is called an \emph{output-polynomial time algorithm} if it runs in $\order{\rm{poly}(n + N)}$ time.
An enumeration algorithm is called \emph{output-polynomial} if it runs in $\order{\rm{poly}(|\mathcal I| + N)}$ time.
%Moreover, an algorithm is called an \emph{output-quasi-polynomial time algorithm} if it runs in $\order{(n + N)^{\log (n + N)}}$ time.
Moreover, an algorithm is called \emph{quasi-polynomial} if it runs in \changed{$\order{(|\mathcal I|)^{\mathrm{poly}(\log (|\mathcal I|))}}$} time, and  \emph{output-quasi-polynomial} if it runs in \changed{$\order{(|\mathcal I| + N)^{\mathrm{poly}(\log (|\mathcal I| + N))}}$} time.  

If an enumeration algorithm runs in $O(N)$ total time, we say that algorithm takes \emph{Constant Amortized Time} (\CAT).
Finally, we sometimes measure the \emph{delay} of an enumeration algorithm, that is the the maximum computation time elapsed between two consecutive outputs.
% More precisely, delay is defined as the maximum computation time, including the time needed to produce the first output and the time from the last output to the termination of the algorithm.
An algorithm whose delay is $\order{\text{poly}(|\mathcal I|)}$ is called \emph{polynomial-delay}, and the algorithm is clearly output-polynomial since its total running time is $\order{N\cdot \text{poly}(|\mathcal{I}|)}$. 

% It is known that if we have an output-polynomial-time enumeration algorithm for \MCS{}, 
% \MCSAS{} can be solved in polynomial time.
% Therefore, we can show the hardness of \MCS{} by showing that \MCSAS{} is \NP-hard.
% See \Cref{sec:relation:hardness} for the details.

We also evaluate the efficiency of computing an index, i.e., any compact data structure that allows efficient membership test and enumeration on the solutions.\footnote{Indices can provide more refined operations like random access and random order enumeration \cite{carmeli2022answering}, but at least membership and enumeration should be guaranteed.}
Let $\mathcal D$ be an index that stores all solutions of an instance of an enumeration problem. 
Classic examples are indices that can be represented with edge-labeled graphs, such as tries, labeled DAGs, and SeqBDD/ZDD, where the size $\size{\mathcal D}$ is the sum of the number of vertices and edges. Such indices are particularly interesting when they can be computed in just polynomial time.
% If an algorithm generates $\mathcal D$ in $\order{(|\mathcal I| + \size{\mathcal D})^c}$ time for some constant~$c$, the algorithm is said to be an \emph{output-polynomial time algorithm}.

Finally, we study and evaluate the complexity of assessment and counting problems. The analogous of \NP-hardness (resp. \NP-completeness) for counting problems is given by \#\P-hardness (resp. \#\P-completeness)~\cite{VALIANT1979189}: a \#\P-hard problem cannot have a polynomial-time counting algorithm, unless $\P = \NP$. 
To be able to gain insight on such hard problems, assessment, which can be seen as a ``partial'' form of counting, is a valid tool. Indeed, even \#\P-complete problems may allow efficient assessment algorithms for polynomial values of $z$~\cite{punzi2023paths}. 
We note that if a counting problem is polynomial-time, then assessment is trivially polynomial-time as well. On the other hand, if the counting problem is \#\P-complete, we cannot perform assessment in time polynomial in the size of the input, as a binary search on the threshold $z$ would allow us to solve the original counting problem in polynomial time as well. 
Thus, an assessment algorithm is often considered efficient if it runs in $\order{\text{poly}(|\mathcal I| + z)}$ time. Since $z$ can be exponential in the size of the input, the strongest hardness results are the ones where we assume $z = O(|\mathcal{I}|)$.

\begin{table}[t]
  \centering
  % \begin{minipage}{\textwidth}
  %   \centering
    \begin{tabular}{r|c|c|c|c|c}
      Problem & General & $\size{\Sigma} = 2$ & $k=2$ & $k=O(1)$ & $n=O(1)$ \\
      \hline
      \MCSAS{} & \makecell{\NP-C even \\ if $\size{\mathcal Z} = \order{n}$
      } & -  &\P~\cite{conte2023isaac} &\P &\P  \\
      \hline
      \makecell{\textsc{MCS Assessment} \\ for $z=\mathrm{poly}(n)$}& \makecell{\NP-H even \\ if $z = \order{n}$
      } & -  &\P~\cite{conte2023isaac} & \P  & \P   \\
      % \makecell{\textsc{MCS Assessment} }& \makecell{\NP-H even \\ if $z = \order{n}$
      % } & \makecell{\NP-H for \\ general $z$}  &\P~\cite{conte2023isaac} & \makecell{\P~when \\ $z= {\rm poly}(n)$}  & \P   \\
      \hline
      \textsc{MCS Counting} & \makecell{\#\P-C} & \#\P-C  & \P~\cite{conte2023isaac} & - &\P   \\
      \hline
      \textsc{MCS Enumeration} & No OP alg. & MIS-H  & \CAT~\cite{conte2023isaac} & \P-delay &\P   \\
      \hline
      \textsc{MCS Indexing} & No OP alg. & MIS-H  & \P~\cite{conte2023isaac} & - &\P   \\
      % \hline
      % \textsc{Shortest MCS} & \NP-C & \NP-C \\
      \end{tabular}
  % \end{minipage}
  % \hfill
  % \begin{minipage}{0.54\textwidth}
  %   \centering
  %   \begin{tabular}{r|c|c}
  %     MCS Indexing  & General & $\size{\Sigma} = 2$ \\
  %     % Data Structures & General & $\size{\Sigma} = 2$ \\
  %     \hline
  %     Trie & No OP alg. & MIS-H \\
  %     \hline
  %     Labeled DAG & No OP alg. & MIS-H \\
  %     \hline
  %     SeqBDD/ZDD & No OP alg. & MIS-H \\
  %   \end{tabular}
  % \end{minipage}
  \caption{Summary of our results on the complexity of MCS enumeration and related problems, where $k$ is the number of input strings and $n$ their maximum length. \NP-H, \NP-C and \#\P-C are short for \NP-hard, \NP-complete and \#\P-Complete, respectively. No OP alg.\ means no Output-Polynomial time algorithm exists unless $\P = \NP$. MIS-H means the problem is at least as hard as \MIS{}. The complexity of problems marked with `-' is open. Results for $k=2$ are included for completeness, but follow from~\cite{conte2023isaac}. Results for $n=O(1)$ can be generalized up to values of $n$ polylogarithmic in the input size, but for readability the discussion is only given in Section~\ref{section:parameterized}. }
  \label{tab:neg:results}  
\end{table} %hi kazuhiro, check the chat

\subsection{Problem definitions and results}
\label{section:contribution}
In this work, we investigate the complexity of %the enumeration of maximal common subsequences, 
\MCS{}, as well as related problems on MCS, with the aim of giving a complete picture of which problems are tractable, and which are not.
In the following, for a given set of strings $\mathcal S$, we denote with $\|\mathcal S \|$ its total size, i.e. the sum of the length of the strings in $\mathcal S$.

Formally, the MCS enumeration problem can be stated as follows:
\begin{problem}[\MCS{}]
    Given a set $\mathcal S$ of strings, the task of \MCS{} is to output all maximal common subsequences of $\mathcal S$.
\end{problem}

% , and it is also a special case of closed frequent subsequence mining are reduced to MCS enumeration when we constraint the frequency to be equal to the number $k$ of strings in the MCS problem. 
% These two problems include the maximal common subsequence enumeration as a common subproblem. 
% When the frequency constraint equals the number of given strings, 
% a maximal/closed frequent subsequence is a common subsequence in given strings.

As our main result, we show that there is no output-polynomial-time algorithm for \MCS{} in multiple strings, unless $\P = \NP$. % is \NP-hard.
To show this hardness, we start from the following decision problem:
\begin{problem}[\MCSAS{}]
    Given a set $\mathcal S$ of strings and a set $\mathcal Z$ of maximal common subsequences of $\mathcal S$,
    \MCSAS{} asks whether there exists a maximal common subsequence of $\mathcal S$ 
    that is not contained in $\mathcal Z$.
\end{problem}
% Let \mS be a set of $k$ strings and $\mathcal Z$ be a set containing some of the MCSs of $\mathcal S$.
% The problem, denoted as \MCSAS{}, asks whether or not there exists an MCS of \mS other than those contained in $\mathcal Z$. 
In \Cref{section:hardness-another}, we prove the hardness of \MCSAS{} through a reduction from 3-\SAT{}, yielding:
\begin{restatable}{theorem}{hardnessMCSAS}
\label{thm:hardnes:mcsas}
    \MCSAS{} is \NP-complete, even for instances where $|\mathcal{Z}| = O(n)$, where $n$ is the maximum length of an input string. %, even when $\size{\mathcal Z}$ is bounded by $\size{\Sigma}/2$, where $\Sigma$ is an alphabet.
\end{restatable}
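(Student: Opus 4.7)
The plan is to prove Theorem~\ref{thm:hardnes:mcsas} in two stages: first membership in \NP, then \NP-hardness by reduction from 3-\SAT.

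Membership in \NP is immediate. A candidate solution is a string $S$, and one can verify in polynomial time that (i) $S$ is a common subsequence of every string in $\mathcal S$ by standard greedy matching, (ii) $S$ is maximal, by checking for every position $i \in \{0, \ldots, |S|\}$ and every character $c$ in the alphabet whether inserting $c$ at position $i$ still yields a common subsequence of $\mathcal S$, and (iii) $S \notin \mathcal Z$, which is a straightforward string comparison. Maximality requires at most $O(|S|\cdot |\Sigma|)$ checks, each of polynomial cost.

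For \NP-hardness, I would reduce from 3-\SAT. Given a formula $\phi$ with variables $x_1,\ldots,x_\nu$ and clauses $C_1,\ldots,C_m$, the idea is to build $\mathcal S$ over an alphabet containing a pair of symbols $t_i, f_i$ for each variable plus auxiliary symbols for clauses, so that:
(a) one long ``template'' string in $\mathcal S$ has the structure of $\nu$ consecutive variable blocks followed by $m$ clause blocks, designed so that any common subsequence must, in each variable block, commit to either the $t_i$-side or the $f_i$-side, thereby encoding a truth assignment;
(b) for each clause $C_j$, a short companion string forces the clause block of an MCS to contain a literal-symbol coming from the side consistent with the chosen assignment on at least one of the three variables appearing in $C_j$;
(c) MCSs that do not encode a full, consistent, satisfying assignment degenerate onto a small family of ``canonical'' strings, one for each syntactic way in which the encoding can break (e.g. stopping early in the template, producing a fixed partial prefix per variable index, or falling into a purely clause-driven skeleton).
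If the gadgets are rigid enough, this family has size $O(n)$, and I would place exactly these canonical strings into $\mathcal Z$. Then an MCS outside $\mathcal Z$ exists if and only if $\phi$ is satisfiable, which completes the reduction.

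The main obstacle is obtaining the tight bound $|\mathcal Z| = O(n)$. Typical enumeration-hardness reductions allow exponentially many ``junk'' MCSs and need only $|\mathcal Z|$ to be polynomially bounded, whereas here we must ensure that \emph{all} junk MCSs collapse onto a linearly long list. I expect this to require a carefully engineered template together with ``one-shot'' clause strings whose interaction admits very few alternative alignments; in particular, distinct variable-indexed positions should all be forced by the same handful of gadgets, rather than each breeding its own family of degenerate MCSs. A possible fallback, if the direct construction yields only $|\mathcal Z|=\mathrm{poly}(n)$, is to pad the input strings with a neutral alphabet until $n$ itself dominates the junk count, thereby restoring the $O(n)$ bound without harming soundness or completeness of the reduction.
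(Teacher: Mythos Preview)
Your membership-in-\NP\ argument is fine and matches the paper. The hardness direction, however, is not a proof but a sketch of a plan, and the crucial construction is absent. You correctly identify the obstacle---controlling the ``junk'' MCSs so that $|\mathcal Z|=O(n)$---but you do not resolve it, and the architecture you describe (variable blocks followed by clause blocks, with auxiliary clause symbols) is precisely the kind of design that tends to spawn an uncontrollable number of degenerate MCSs, since partial assignments and partial clause-block matchings each breed their own families. Your padding fallback only helps once you already have $|\mathcal Z|=\mathrm{poly}(n)$, which you have not established either; as written, nothing in your outline bounds the number of junk MCSs at all.

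The paper's construction is quite different and supplies the missing idea. There are no clause blocks or clause symbols: the alphabet is just $\{x_i,\bar x_i : 1\le i\le v\}$, the template is $S_0 = x_1\bar x_1\cdots x_v\bar x_v$, and the string for clause $C_i$ is built from $v-1$ copies of the \emph{reversed} block $R = x_v\bar x_v\cdots x_1\bar x_1$ with the three clause literals inserted at the appropriate gaps. The set $\mathcal Z$ is exactly $v$ strings, one per variable: $Z_i$ is $S_0$ with the pair $x_i\bar x_i$ deleted. The key structural fact is that any common subsequence not contained in some $Z_i$ must already contain at least one of $x_j,\bar x_j$ for \emph{every} $j$; picking one literal per variable then encodes a full assignment, and the reversed-repetition structure of each clause string forces that assignment to satisfy the clause. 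So the junk is not eliminated by gadget rigidity as you propose; rather, $\mathcal Z$ is chosen so that escaping it already forces a complete assignment. That inversion---defining $\mathcal Z$ to absorb everything that fails to commit on some variable, rather than cataloguing failure modes---is the idea your plan is missing.
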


This type of \emph{another solution} problems (also called finished decision problems~\cite{DBLP:phd/basesearch/Bokler18}, or additional problems~\cite{BABIN2017316}) are frequently used to show the hardness of the corresponding enumeration problems. This implication is folklore~\cite{DBLP:phd/basesearch/Bokler18,DBLP:journals/algorithmica/KhachiyanBEG08,DBLP:journals/dam/BorosM24,DBLP:journals/amai/BorosGKM03,10.1145/2629550};
%and frequently used to show the hardness of enumeration problems, 
intuitively, if \MCSAS{} has negative answer (and thus $\mathcal{Z}$ is the whole set of MCS), then any output-sensitive enumeration algorithm for MCS would terminate in time $O(\mathrm{poly}(|\mathcal{Z}|))$. More details concerning this are given in \Cref{sec:relation:hardness}. 
%\footnote{More details in \Cref{sec:relation:hardness}. }
Thus, the \NP-hardness of \MCSAS{} implies the following hardness result for \MCS{}:
%that no output-polynomial time algorithms exist for \MCS{} unless $\P = \NP$:
\begin{restatable}{corollary}{hardnessMCS}
\label{thm:hardnes:mcs}
    There is no output-polynomial time algorithm for \MCS{}, unless $\P = \NP$.
\end{restatable}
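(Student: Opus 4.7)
The plan is to derive the corollary from \Cref{thm:hardnes:mcsas} via the standard ``another solution'' (or \emph{finished decision}) reduction, showing that any output-polynomial algorithm for \MCS{} would solve the \NP-complete problem \MCSAS{} in polynomial time. I would argue by contradiction: suppose such an algorithm $A$ exists, with running time bounded by $p(\isize + N)$ on any input $\mS$ producing $N$ maximal common subsequences, for some fixed polynomial $p$.

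Given an \MCSAS{} instance $(\mS, \mathcal{Z})$, the decision procedure pre-hashes $\mathcal{Z}$ into a dictionary supporting fast membership queries, then simulates $A$ on $\mS$. Whenever $A$ emits a new MCS, I would test it against $\mathcal{Z}$: if the output lies outside $\mathcal{Z}$, halt and answer YES. If $A$ terminates normally without ever producing a witness outside $\mathcal{Z}$, answer NO. Finally, I would equip the simulation with a timer that aborts after $p(\isize + |\mathcal{Z}|) + 1$ steps and, upon firing, also answers YES.

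Correctness hinges on the abort rule, which is the step deserving the most care. In a NO instance, $\mathcal{Z}$ already contains every MCS, so $N = |\mathcal{Z}|$; then $A$ must finish within $p(\isize + |\mathcal{Z}|)$ time while every output lies in $\mathcal{Z}$, and the timer never fires, so the procedure correctly answers NO. In a YES instance there is an MCS outside $\mathcal{Z}$, so $N > |\mathcal{Z}|$; either $A$ produces such a witness before the timer expires (answer YES), or the timer fires first, which can only happen when $A$ is still running past the bound $p(\isize + |\mathcal{Z}|)$, again forcing $N > |\mathcal{Z}|$, so YES is correct. The total running time is polynomial in $\isize + |\mathcal{Z}|$, i.e.\ in the size of the \MCSAS{} instance. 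Combined with \Cref{thm:hardnes:mcsas}, this yields $\P = \NP$, the desired contradiction. The only subtle point is reading the output-polynomial guarantee contrapositively to justify the abort rule, since the hypothesis on $A$ is a total-time rather than a per-output delay bound; no new combinatorial argument about MCSs is needed.
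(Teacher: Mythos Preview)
Your argument is correct and is essentially the same as the paper's: both simulate a hypothetical output-polynomial enumeration algorithm with a time budget determined by the size of $\mathcal Z$, and infer a YES answer from the contrapositive of the running-time guarantee when the budget is exceeded. The only cosmetic difference is that you test each emitted MCS against $\mathcal Z$ on the fly, whereas the paper simply waits for termination (or timeout) and then compares; this does not change the argument.
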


We also focus on the task of creating an index that stores all MCSs, which we call \textsc{MCS Indexing}. Specifically, we are concerned with indexes that allow us to perform efficient membership queries and enumeration for the MCSs. 
% efficient membership queries to any MCS, effectively providing random access to the stored solutions. 
Examples of such indexes are tries, labeled DAGs and SeqBDD/ZDD.  The complexity of this problem has already been studied in the literature \cite{conte2023isaac,SakaiArxiv2023} for the case of $k=2$ strings, we here study the general case:
\begin{problem}[\textsc{MCS Indexing}]
    \textsc{MCS Indexing} asks, given any input set of strings $\mathcal S$, to output an index $\mathcal{D}$ in time and space $O(\mathrm{poly}(\isize))$, which stores the set of maximal common subsequences of $\mathcal S$, allowing output-polynomial enumeration and polynomial-time membership testing of the maximal common subsequences in $\mS$. 
    %, where $\mathcal D$ is an index with the minimum size.
    % \footnote{Note that, since the number of MCSs may be exponential in $\isize$, the requirement on the index size is not trivial.} 
\end{problem}
%where the stored MCSs correspond to distinct paths in the index. 
% These structures are relevant to our complexity study, as one can always enumerate all the paths in polynomial delay. Thus, in cases when \MCS{} enumeration 
\Cref{thm:hardnes:mcs} immediately implies the hardness of \textsc{MCS Indexing}: if we can construct such an index, then we can enumerate all MCSs in output-polynomial time. Thus, we arrive at the following:
%Once these indexes are constructed, one can enumerate all the maximal common subsequences by enumerating all the paths in these indexes, which can be done in polynomial delay. This suggests the following theorem.
\begin{restatable}{corollary}{hardnessMCSIndex}
\label{thm:hardnes:mcs:index}
    There is no output-polynomial time algorithm for 
    \textsc{MCS Indexing} %for a trie, a labeled DAG, and SeqBDD/ZDD 
    unless $\P = \NP$.
    % Moreover, if \textsc{MCS Indexing} for these data structures on binary strings has  an output-polynomial time algorithm, then \MIS{} in hypergraphs can be solved in output-polynomial time.
\end{restatable}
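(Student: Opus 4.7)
The plan is to obtain Corollary~\ref{thm:hardnes:mcs:index} as a direct consequence of Corollary~\ref{thm:hardnes:mcs}, by composing the two guarantees embedded in the definition of \textsc{MCS Indexing}. I would proceed by contradiction. Assume there exists an algorithm $\mathcal{A}$ that, on every input set of strings $\mathcal{S}$, solves \textsc{MCS Indexing}. Then, by the problem's definition, $\mathcal{A}$ produces an index $\mathcal{D}$ in time and space $O(\mathrm{poly}(\isize))$, and $\mathcal{D}$ itself admits an output-polynomial enumeration routine listing all MCSs of $\mathcal{S}$.

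Next, I would describe the resulting enumeration procedure for \MCS{}. Given an instance $\mathcal{S}$ with $N$ maximal common subsequences, first invoke $\mathcal{A}$ to build $\mathcal{D}$ in time $O(\mathrm{poly}(\isize))$, and then run the enumeration routine associated with $\mathcal{D}$, producing all $N$ solutions in time $O(\mathrm{poly}(|\mathcal{D}| + N))$. Since $|\mathcal{D}| = O(\mathrm{poly}(\isize))$, the overall running time is bounded by $O(\mathrm{poly}(\isize + N))$, which is output-polynomial for \MCS{}. This contradicts Corollary~\ref{thm:hardnes:mcs} under $\P \neq \NP$, completing the proof.

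Since the argument is a routine composition of two polynomial time bounds, there is no genuine obstacle. The one subtlety worth checking is that the output-polynomial enumeration guarantee for $\mathcal{D}$ must be interpreted correctly: the enumeration should run in time polynomial in $|\mathcal{D}| + N$ (equivalently $\isize + N$, up to polynomial factors), so that the final bound is indeed a function of $\isize + N$ rather than of $N$ alone. This is ensured by the very definition of \textsc{MCS Indexing}, so no additional argument is required.
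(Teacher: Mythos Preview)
Your proposal is correct and follows exactly the paper's own argument: compose the polynomial-time index construction with the index's output-polynomial enumeration guarantee to obtain an output-polynomial algorithm for \MCS{}, contradicting Corollary~\ref{thm:hardnes:mcs}. The paper states this in a single sentence, and your more detailed unpacking of the time bounds is accurate.
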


To give a complete picture, we also investigate other problems related to enumeration and mining, namely assessment and {counting}. % which is the problem of determining whether the number of solutions is at least a given number $z$.
% $z=\mathrm{poly}(|\mathcal{I}|)$, where $|\mathcal{I}|$ is the size of the input.
% Assessment can be seen as a ``partial'' form of counting, and even \#\P-complete problems may allow efficient assessment algorithms for polynomial values of $z$~\cite{punzi2023paths}. 
In the MCS setting, the assessment problem is formally stated as:
\begin{problem}[\textsc{MCS Assessment}]
    Given a set $\mathcal S$ of strings and an integer $z$, \textsc{MCS Assessment} asks whether or not the number of maximal common subsequences in $\mathcal S$ is more than $z$.
\end{problem}
For this problem, \Cref{thm:hardnes:mcsas} directly implies that it is \NP-hard to even determine whether the number of MCSs exceeds the maximum length $n$ of the input strings, leading to:
\begin{restatable}{corollary}{hardnessMCSASSESS}
\label{thm:hardnes:mcsassess}
    \textsc{MCS Assessment} is \NP-Hard, even if $z = O(n)$ where $n$ is the maximum length of an input string. % even if $z \le \size{\Sigma}/2$, where $\Sigma$  is an alphabet.
\end{restatable}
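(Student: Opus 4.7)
The plan is to derive Corollary~\ref{thm:hardnes:mcsassess} directly by reduction from \MCSAS{}, leveraging Theorem~\ref{thm:hardnes:mcsas} which guarantees \NP-completeness of \MCSAS{} even when $|\mathcal{Z}| = O(n)$.

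Given an instance $(\mathcal{S}, \mathcal{Z})$ of \MCSAS{} with $|\mathcal{Z}| = O(n)$, I would simply set $z := |\mathcal{Z}|$ and ask the \textsc{MCS Assessment} oracle whether the number of maximal common subsequences of $\mathcal{S}$ is strictly more than $z$. Since $\mathcal{Z}$ is, by definition of \MCSAS{}, a set of maximal common subsequences of $\mathcal{S}$, the total number of MCSs of $\mathcal{S}$ is at least $|\mathcal{Z}|$. Hence, the number of MCSs exceeds $z = |\mathcal{Z}|$ if and only if there exists some MCS of $\mathcal{S}$ that is not contained in $\mathcal{Z}$, which is precisely the question answered by \MCSAS{}.

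This shows that a polynomial-time algorithm for \textsc{MCS Assessment} with threshold $z = O(n)$ would solve \MCSAS{} with $|\mathcal{Z}| = O(n)$ in polynomial time, contradicting Theorem~\ref{thm:hardnes:mcsas} unless $\P = \NP$. The reduction is trivially polynomial (we only output the same string set $\mathcal{S}$ together with the integer $|\mathcal{Z}|$, discarding $\mathcal{Z}$ itself), and the bound $z = O(n)$ in the assessment instance is inherited verbatim from the bound $|\mathcal{Z}| = O(n)$ in the \MCSAS{} instance.

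There is no real obstacle here: the only subtle point is the observation that $\mathcal{Z}$ is promised to consist of \emph{valid} MCSs of $\mathcal{S}$, so that ``more than $|\mathcal{Z}|$ MCSs in total'' is logically equivalent to ``at least one MCS outside $\mathcal{Z}$.'' Without this guarantee the reduction would fail, but the definition of the \MCSAS{} problem provides it explicitly.
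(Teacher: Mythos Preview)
Your proposal is correct and follows essentially the same argument as the paper: set $z = |\mathcal Z|$ and observe that, because $\mathcal Z$ consists of genuine MCSs of $\mathcal S$, the answer to \textsc{MCS Assessment} on $(\mathcal S, z)$ coincides with the answer to \MCSAS{} on $(\mathcal S, \mathcal Z)$. The paper phrases this via the specific 3-\SAT{} reduction (taking $z = v$, the number of variables), but the underlying reasoning is identical.
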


The final problem we address is the problem of MCS counting, formally defined as:
\begin{problem}[\textsc{MCS Counting}]
    Given a set $\mathcal S$ of strings, % and an integer $z$,
    \textsc{MCS Counting} asks for
    the number of maximal common subsequences in $\mathcal S$.
\end{problem}

In this regard, we dedicate \Cref{section:binary-hardness} to showing a strong direct link between MCSs and maximal independent sets in hypergraphs, in the form of a one-to-one reduction from the problem of enumerating maximal independent sets in hypergraphs (\MIS{}) to \MCS{}.

\begin{restatable}{theorem}{misReduction}
\label{thm:mis-reduction}
    Given a hypergraph $\mathcal H = (V, \mathcal E)$, there exists a set of binary strings $\mathcal S(\mathcal H)$, computable in \changed{time} polynomial in $\|\mathcal{H}\|$, and a bijection between maximal independent sets in $\mathcal H$ and the set $MCS(\mathcal{S}(\mathcal{H})) \setminus \set{w}$, where the string $w$ consists of $\size{V}-1$ repetition of the string ``01''.
\end{restatable}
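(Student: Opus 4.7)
The plan is to construct, for a given hypergraph $\mathcal{H} = (V, \mathcal{E})$ with $|V| = n$, a polynomial-size set $\mathcal{S}(\mathcal{H})$ of binary strings whose maximal common subsequences---apart from the exceptional string $w = (01)^{n-1}$---are in one-to-one correspondence with the maximal independent sets of $\mathcal{H}$. The first ingredient is an encoding $\Phi: 2^V \to \set{0,1}^*$ that sends each subset $I \subseteq V$ to a binary string, with the essential property that $\Phi(I_1)$ is a subsequence of $\Phi(I_2)$ if and only if $I_1 \subseteq I_2$. A natural way to achieve this is to fix an ordering $v_1, \ldots, v_n$ of $V$ and define $\Phi(I)$ as a concatenation of per-vertex ``blocks'', where the block for $v_i$ is one fixed binary pattern if $v_i \in I$ and another if $v_i \notin I$; the patterns are chosen so that the $n+1$ ``separator'' zeros of $\Phi(I_2)$ must align with the $n+1$ separator zeros of $\Phi(I_1)$ in any embedding, forcing a block-by-block matching in which the presence of a ``1'' in a slot of $\Phi(I_1)$ forces the corresponding vertex into $I_2$.

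The second ingredient is the family of hyperedge strings. For each $e = \set{v_{i_1}, \ldots, v_{i_k}} \in \mathcal{E}$, I would construct a binary string $S_e$ such that $\Phi(I)$ is a subsequence of $S_e$ if and only if $I \not\supseteq e$, i.e., at least one vertex of $e$ is missing from $I$. A first attempt would be to take $S_e$ to be the concatenation of the strings $\Phi(V \setminus \set{v_{i_l}})$ for $l = 1, \ldots, k$, so that any $\Phi(I)$ missing some $v_{i_l}$ embeds cleanly into the $l$-th block. The subtlety is that, absent additional structure, a $\Phi(I)$ with $I \supseteq e$ might still embed by crossing block boundaries; the remedy is to insert carefully-designed ``barrier'' substrings between the blocks that constrain any legal embedding of a string of the form $\Phi(I)$ to stay within a single block. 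With $\mathcal{S}(\mathcal{H}) \coloneq \set{S_e : e \in \mathcal{E}}$ (possibly augmented by a constant number of ``framing'' strings that force every common subsequence to have one of the admissible shapes), the common subsequences of the form $\Phi(I)$ are then exactly those with $I$ independent in $\mathcal{H}$.

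The final step is to characterize all maximal common subsequences. Since $\Phi$ respects set inclusion, the inclusion-maximal elements of $\set{\Phi(I) : I \text{ independent}}$ are precisely $\set{\Phi(I) : I \text{ maximal independent}}$, which furnishes the desired bijection $\Psi(I) \coloneq \Phi(I)$ on that portion of the output; polynomial-time computability of $\mathcal{S}(\mathcal{H})$ is immediate from the explicit construction. The main obstacle is showing that the only maximal common subsequence which is \emph{not} of the form $\Phi(I)$ is the string $w = (01)^{n-1}$. The intuition for why $w$ appears is that each $S_e$ contains $(01)^{n-1}$ as a common ``scaffolding'' subsequence regardless of the hyperedge, while no $(01)^{n}$ embeds into every $S_e$ simultaneously, so $w$ itself cannot be extended while remaining common. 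The hard part is ruling out every other ``unstructured'' MCS: one must argue, via a careful case analysis on the shape of a common subsequence $s$, that either $s$ can be extended to some $\Phi(I)$ with $I$ independent (hence $s$ is not maximal) or $s$ is a subsequence of $w$ (hence $s$ contributes no new MCS). This exhaustive classification is what ensures the bijection is exact.
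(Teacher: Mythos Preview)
Your high-level plan is reasonable in spirit, but it diverges from the paper's construction in a way that leaves a genuine gap precisely at the point you flag as ``the hard part''.

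The paper's construction is considerably simpler than yours. It takes $\mathcal{S}(\mathcal{H}) = \{S_0, S_1, \ldots, S_{|\mathcal{E}|}\}$ where $S_0 = (01)^{|V|}$ is a single \emph{framing} string, and for each hyperedge $E_i$ the string $S_i$ is a \emph{single} string obtained from $(01)^{|V|-1}$ by inserting one extra $0$ per vertex of $E_i$ at prescribed positions (so $S_i$ has exactly $|V|-1$ ones and $|V|+|E_i|-1$ zeros). The encoding of a subset $U$ is $\psi(U) = X_1\cdots X_{|V|}$ with $X_j = 01$ if $j\in U$ and $X_j = 0$ otherwise, hence $\psi(U)$ has exactly $|V|$ zeros. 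The framing string $S_0$ forces every common subsequence to be a subsequence of $(01)^{|V|}$, and a one-line observation (no $11$ substring in any MCS) then reduces the shape of every MCS to either $\psi(U)$ for some $U$ or a subsequence of $(01)^{|V|-1}$. Crucially, the reason the \emph{only} exceptional MCS is exactly $w=(01)^{|V|-1}$ is that each $S_i$ contains precisely $|V|-1$ ones: this simultaneously makes $(01)^{|V|-1}$ common and rules out any longer supersequence that is not already some $\psi(U)$.

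Your proposed $S_e$, by contrast, is a concatenation of $|e|$ blocks $\Phi(V\setminus\{v_{i_l}\})$ separated by unspecified ``barriers''. Such a string will typically contain far more than $|V|-1$ ones, and there is no evident mechanism forcing the unique non-$\Phi(I)$ MCS to be \emph{exactly} $(01)^{|V|-1}$. Indeed, with multiple $\Phi$-blocks strung together, one should expect many ``cross-block'' common subsequences that are neither of the form $\Phi(I)$ nor subsequences of $(01)^{|V|-1}$; the barriers you invoke to prevent cross-block embeddings of $\Phi(I)$ will themselves contribute characters that participate in other MCSs. Nothing in your outline controls this, and the promised ``careful case analysis'' cannot be carried out without first pinning down the construction. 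In short, the statement asserts a very specific residual string $w$, and your construction---as sketched---does not make that string appear naturally, whereas in the paper's construction it falls out immediately from the count of $1$'s in each $S_i$.
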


Notably, the reduction holds even when the strings are constrained to use 
a binary alphabet. \MIS{} (also called the minimal transversal enumeration, the minimal set cover enumeration, and the minimal hitting set enumeration) is a long-standing open problem in enumeration algorithm area~\cite{EITER20082035}. The existence of an output-polynomial enumeration algorithm for \MIS{} has long been questioned, thus we call \textsc{MIS-H} the class of problems that can be reduced to \MIS{}, i.e., that are at least as hard.

%In this work, we relate the two problems in the following way:
The reduction has several implications; first of all, designing an output-polynomial-time enumeration algorithm for \MCS{} is challenging even if the alphabet is binary:
\begin{restatable}{corollary}{hardnessMCSBin}
\label{thm:hardnes:mcsbin}
    If \MCS{} for binary strings can be solved in output-polynomial time, \MIS{} can be solved in output-polynomial time.
\end{restatable}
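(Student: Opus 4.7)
The plan is to treat Theorem~\ref{thm:mis-reduction} as a black-box reduction from \MIS{} to \MCS{} on binary alphabets. Given a hypergraph $\mathcal H = (V, \mathcal E)$, I would first invoke the construction of Theorem~\ref{thm:mis-reduction} to build, in time polynomial in $\|\mathcal H\|$, the binary string set $\mathcal S(\mathcal H)$ together with the explicit ``sentinel'' string $w = (01)^{|V|-1}$, which is the unique element of $MCS(\mathcal S(\mathcal H))$ not in the image of the bijection.

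Assuming the existence of an algorithm $\mathcal A$ that enumerates $MCS(\mathcal T)$ for every binary instance $\mathcal T$ in time $\mathrm{poly}(\|\mathcal T\| + |MCS(\mathcal T)|)$, I would run $\mathcal A$ on input $\mathcal S(\mathcal H)$, discard the output equal to $w$ (if any), and translate each remaining MCS into the corresponding maximal independent set via the inverse of the bijection supplied by Theorem~\ref{thm:mis-reduction}. By the bijection, this procedure emits each MIS of $\mathcal H$ exactly once.

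For the running time, let $N$ denote the number of MISs of $\mathcal H$; the bijection yields $|MCS(\mathcal S(\mathcal H))| \in \{N, N+1\}$, and $\|\mathcal S(\mathcal H)\| = \mathrm{poly}(\|\mathcal H\|)$, so $\mathcal A$ terminates in time $\mathrm{poly}(\|\mathcal H\| + N)$. Adding a polynomial-time post-processing step per output gives an output-polynomial algorithm for \MIS{}, completing the proof. The only point requiring care is that the inverse direction of the bijection be polynomially computable from an MCS in $\mathcal S(\mathcal H)$; I do not expect this to be a genuine obstacle, since the construction underlying Theorem~\ref{thm:mis-reduction} is explicit, and inspecting its proof should show that the MIS associated with a given MCS can be read off in polynomial time.
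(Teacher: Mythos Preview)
Your proposal is correct and matches the paper's own argument, which simply remarks that the bijection of Theorem~\ref{thm:mis-reduction} immediately yields the corollary without spelling out the details you provide. Your one caveat is indeed harmless: in the paper the inverse map $\psi^{-1}$ just reads an MCS $X$ as a concatenation $X_1\cdots X_{|V|}$ with each $X_i\in\{0,01\}$ and returns $\{i : X_i = 01\}$, which is computable in linear time.
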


Similarly as before, the hardness result for indexing follows:
\begin{restatable}{corollary}{hardness}
\label{thm:hardnes:mcs:index-bin}
    If \textsc{MCS Indexing} %for a trie, labeled DAG, and SeqBDD/ZDD on binary strings 
    has an output-polynomial time algorithm,
    then \MIS{} in hypergraphs can be solved in output-polynomial time.
\end{restatable}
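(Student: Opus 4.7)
The plan is a direct chaining of Theorem \ref{thm:mis-reduction} with the hypothesized \textsc{MCS Indexing} algorithm. Given a hypergraph $\mathcal H = (V, \mathcal E)$, I would first invoke the construction of Theorem \ref{thm:mis-reduction} to build, in time polynomial in $\|\mathcal H\|$, the binary string collection $\mathcal S(\mathcal H)$ together with the explicit bijection $\phi$ between maximal independent sets of $\mathcal H$ and $MCS(\mathcal S(\mathcal H)) \setminus \set{w}$, where $w = (01)^{|V|-1}$ is the distinguished string provided by the theorem.

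Next, I would feed $\mathcal S(\mathcal H)$ into the assumed output-polynomial \textsc{MCS Indexing} procedure. By definition, this yields, in time polynomial in $\|\mathcal H\|$, an index $\mathcal D$ of size $O(\mathrm{poly}(\|\mathcal H\|))$ that supports output-polynomial enumeration of $MCS(\mathcal S(\mathcal H))$. I would then run this enumeration and, for every output $s$, first check whether $s = w$ via an $O(|V|)$ comparison; if $s \neq w$, I would compute $\phi^{-1}(s)$ and output the corresponding maximal independent set of $\mathcal H$, discarding $s$ otherwise.

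For the complexity, let $N$ be the number of maximal independent sets of $\mathcal H$. Construction of $\mathcal S(\mathcal H)$ and $\mathcal D$ takes polynomial time in $\|\mathcal H\|$, and the enumeration via $\mathcal D$ runs in $O(\mathrm{poly}(|\mathcal D| + |MCS(\mathcal S(\mathcal H))|))$ time. Since $\phi$ bijects onto $MCS(\mathcal S(\mathcal H)) \setminus \set{w}$, we have $|MCS(\mathcal S(\mathcal H))| \leq N + 1$, so the whole procedure runs in $O(\mathrm{poly}(\|\mathcal H\| + N))$ time, which is output-polynomial for \MIS{}.

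The only point requiring care, and the closest thing to an obstacle, is that $\phi^{-1}$ must be computable in polynomial time per MCS; otherwise the per-solution translation step could dominate the enumeration budget. This is not a substantive difficulty, since Theorem \ref{thm:mis-reduction} provides an explicit bijection, so inverting it should amount to reading off a subset of $V$ directly from the structure of a given MCS. Nevertheless, it is worth confirming this decoding step against the concrete construction used in the proof of Theorem \ref{thm:mis-reduction}, so that the overall reduction remains fully polynomial in the intended output-sensitive sense.
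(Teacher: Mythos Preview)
Your proposal is correct and follows essentially the same approach as the paper. The paper itself derives this corollary in one line (``Similarly as before, the hardness result for indexing follows''), implicitly chaining the observation that an index yields output-polynomial enumeration with \Cref{thm:hardnes:mcsbin}; you have simply unpacked that chain explicitly, and your extra care about the polynomial-time computability of the inverse bijection is confirmed by the concrete form of $\psi$ given in \Cref{lem:cs:is}.
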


Moreover, while the exact complexity of \MIS{} is open, \MISC is known to be \#\P-complete, even in the restricted case of simple graphs (i.e., hypergraphs where all hyperedges have size 2) \cite{doi:10.1137/S0097539797321602}. The above bijection implies that any algorithm for \MCSC can solve \MISC, so we obtain the following:

% Concerning counting, it also follows that if we can count the number of MCS in $\mathcal S(\mathcal H)$ in polynomial time, then the number of maximal independent sets in $\mathcal H$ can also be counted in polynomial time. 
% In other words, \MCSC
% is at least as hard as \textsc{MIS Counting}, which is known to be \#P-complete, by the \#P-completeness of counting minimal vertex covers~\cite{doi:10.1137/0208032}.

\begin{restatable}{corollary}{hardnessMCScount}
\label{thm:hardnes:mcscount-bin}
    \MCSC is \#\P-complete, even on binary strings.
\end{restatable}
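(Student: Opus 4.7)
The plan is to prove the corollary by combining two ingredients: the bijective reduction of Theorem~\ref{thm:mis-reduction} (which gives \#\P-hardness) and a direct verifier argument (which gives membership in \#\P). Both parts are short.

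For membership in \#\P, I would exhibit a polynomial-time nondeterministic verifier that, given an instance $\mS$ and a candidate string $S$, accepts iff $S$ is a maximal common subsequence of $\mS$. The length of $S$ is at most $n$, so $S$ is a witness of polynomial size. Verification has two parts: (i) check that $S$ is a common subsequence of every $S_i \in \mS$ via the standard greedy left-to-right matching in $O(|S| + |S_i|)$ time; (ii) check maximality by confirming that for every position $0 \le j \le |S|$ and every character $c \in \Sigma$, the string obtained by inserting $c$ at position $j$ of $S$ is not a common subsequence of $\mS$. Since $|\Sigma| \le \isize$ and there are $|S|+1$ insertion positions, this is polynomial in $\isize$. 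Hence \MCSC is in \#\P.

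For \#\P-hardness on binary strings, I would reduce from \MISC on simple graphs, which is known to be \#\P-complete by~\cite{doi:10.1137/S0097539797321602}. Given a simple graph $\mathcal H = (V, \mathcal E)$, seen as a hypergraph with hyperedges of size $2$, apply the construction of Theorem~\ref{thm:mis-reduction} to obtain in polynomial time a set $\mathcal{S}(\mathcal H)$ of binary strings. By the stated bijection, the maximal independent sets of $\mathcal H$ are in one-to-one correspondence with $MCS(\mathcal{S}(\mathcal H)) \setminus \{w\}$, where $w$ is the fixed string consisting of $|V|-1$ repetitions of \emph{01}. Therefore
\[
\#\mathrm{MIS}(\mathcal H) \;=\; \#\mathrm{MCS}(\mathcal{S}(\mathcal H)) \;-\; \mathbf{1}[w \in MCS(\mathcal{S}(\mathcal H))].
\]
The indicator can be evaluated in polynomial time by running the \#\P-verifier above on the specific string $w$. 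Thus an oracle for \MCSC on binary inputs yields a polynomial-time algorithm for \MISC on simple graphs, establishing \#\P-hardness even restricted to binary strings.

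The only subtlety, and the place where I would be careful, is confirming that $\mathbf{1}[w \in MCS(\mathcal{S}(\mathcal H))]$ is indeed deterministically computable in polynomial time (so that the reduction is a Turing/parsimonious reduction compatible with \#\P-hardness), and that the construction $\mathcal{S}(\mathcal H)$ from Theorem~\ref{thm:mis-reduction} preserves instance size polynomially; both follow directly from the theorem's guarantees, so no additional machinery is needed.
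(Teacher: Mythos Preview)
Your proposal is correct and follows essentially the same route as the paper: both derive \#\P-hardness on binary strings from the bijection of Theorem~\ref{thm:mis-reduction} together with the \#\P-completeness of \MISC on simple graphs. Your write-up is in fact more complete than the paper's, which states \#\P-\emph{completeness} but only argues hardness explicitly; your verifier argument for membership in \#\P fills that gap. The other difference is the handling of the extra string $w=(01)^{|V|-1}$: the paper relies on its preprocessing assumption (no vertex in every hyperedge) and Lemma~\ref{obs:maximal} to know that $w$ is always an MCS, giving $\#\mathrm{MIS}=\#\mathrm{MCS}-1$ directly, whereas you compute the indicator $\mathbf{1}[w\in MCS(\mathcal S(\mathcal H))]$ at runtime. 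Both are valid; your version is slightly more robust since it does not depend on the preprocessing step applying cleanly to the simple-graph instances used in the hardness source.
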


Given the previous negative results, one is left to wonder: are there tractable instances of \MCS{} and its related problems? As the alphabet size does not seem promising in this regard, we investigate the remaining parameters $n$ (the length of the strings) and $k$ (the number of strings), and finally present two positive results in \Cref{section:parameterized}.

\newcommand{\ml}{\ensuremath{l}\xspace}
% Note that, for input size $\isize$, we say an algorithm is \emph{quasi-polynomial time} if it runs in $\order{(\isize)^{\log (\isize)}}$ time.

As a first positive result, Section~\ref{section:param-string-len} shows that bounding the length $\ml$ of the \textit{shortest} input string already yields an efficient solution to \MCS{}.\footnote{Clearly, $\ml \le n$, so this is strictly stronger than a parametrization in $n$.}

\begin{restatable}{theorem}{paramMinLen}
    \MCS{} can be solved in %time polynomial in $n$ and $k$ when $\ml$ is logarithmic in the input size, and in time quasi-polynomial in $n$ and $k$ when $\ml$ is polylogarithmic in the input size.
    \FPT~total time with respect to $\ml$. Specifically, when $\ml$ is logarithmic in $\isize$, all MCSs can be enumerated in polynomial time in $n$ and $k$, and when $\ml$ is polylogarithmic in $\isize$, enumeration takes quasi-polynomial time in $n$ and $k$.
\end{restatable}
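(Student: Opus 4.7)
The plan hinges on a simple observation: every common subsequence of $\mathcal{S}$ must in particular be a subsequence of the shortest input string $S^*$ of length $\ml$. Hence the set of candidate MCSs is contained in the set of distinct subsequences of $S^*$, which has size at most $2^{\ml}$. This simultaneously bounds the total number of MCSs and the search space the algorithm has to traverse.

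The algorithm proceeds in three phases. First, enumerate all $2^{\ml}$ position-subsets of $S^*$, write out the induced subsequences, and insert them into a trie to deduplicate, yielding at most $2^{\ml}$ distinct candidate strings $T$ of length at most $\ml$. Second, for each candidate $T$, verify in $O(\isize)$ time that $T$ is a common subsequence of $\mathcal{S}$ by the standard greedy two-pointer scan against each $S_i$. Third, check maximality: for each character $c$ that appears in $S^*$ (at most $\ml$ candidates, since any character occurring in a common supersequence must also appear in $S^*$) and each of the $|T|+1 \le \ml+1$ insertion positions, test whether the extension obtained by inserting $c$ at that position is still a common subsequence of $\mathcal{S}$. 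The candidate $T$ is an MCS if and only if it is a common subsequence and no such one-character extension succeeds. Each maximality check costs $O(\ml^2 \cdot \isize)$.

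Summing over all candidates yields a total running time of $O(2^{\ml} \cdot \ml^2 \cdot \isize)$, which is clearly \FPT in $\ml$. When $\ml = O(\log \isize)$, $2^{\ml} = \mathrm{poly}(\isize)$, so the enumeration runs in time polynomial in $n$ and $k$. When $\ml = O(\mathrm{polylog}(\isize))$, $2^{\ml} = 2^{\mathrm{polylog}(\isize)}$ is quasi-polynomial in $\isize$, and therefore also in $n$ and $k$, yielding the claimed quasi-polynomial enumeration.

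The only non-routine step is justifying the one-character-extension characterization of maximality. This follows by a direct chaining argument: if $T$ is a common subsequence and $T \subsetneq T''$ for some common subsequence $T''$, fix any sequence of single-character insertions $T = T_0 \subsetneq T_1 \subsetneq \cdots \subsetneq T_d = T''$. Since $T_1$ is a subsequence of $T''$ and any subsequence of a common subsequence is itself a common subsequence, $T_1$ is a common subsequence strictly containing $T$ of length $|T|+1$, so the local test already detects the non-maximality of $T$. Once this equivalence is in place, the remaining analysis is routine bookkeeping of the running-time bounds above.
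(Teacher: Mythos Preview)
Your proposal is correct and follows essentially the same approach as the paper: both observe that every common subsequence must be a subsequence of the shortest input string, iterate over the at most $2^{\ml}$ candidate subsequences, and test each for being an MCS. The only difference is that the paper invokes an existing $O(k\,\isize\log\isize)$ maximality test from the literature, whereas you build a self-contained $O(\ml^{2}\,\isize)$ test via single-character extensions; either choice yields the same \FPT\ bound and the same polynomial/quasi-polynomial consequences.
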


Secondly, \MCS~has been shown to be output-sensitive for the special case of two input strings, i.e., $k=2$~\cite{DBLP:journals/algorithmica/ConteGPU22}.
In Section~\ref{section:param-num-strings} we generalize the algorithm of~\cite{DBLP:journals/algorithmica/ConteGPU22} showing that output-polynomial enumeration is possible even when $k=O(1)$.

\begin{restatable}{theorem}{paramk}
    \MCS{} for $k\ge 2$ strings of maximum length $n$ can be solved in \XP-delay with respect to $k$. Specifically, MCSs can be enumerated in $O(kn^{2k+1})$ time delay, after $O(k|\Sigma| n^k)$ time preprocessing.
\end{restatable}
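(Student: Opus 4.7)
My plan is to lift the polynomial-delay enumeration algorithm of Conte et al.\ from $k=2$ to arbitrary $k$, retaining its reverse-search structure but replacing the two-dimensional arguments with $k$-dimensional analogues. The preprocessing step builds, for every $k$-tuple of positions $\mathbf{i}=(i_1,\ldots,i_k)\in [n]^k$ and every character $c\in\Sigma$, a table entry $\mathsf{nxt}(\mathbf{i},c)$ storing the coordinate-wise smallest tuple $\mathbf{j}$ with $j_l>i_l$ and $S_l[j_l]=c$ for all $l$ (or $\bot$ if no such tuple exists). Combining per-string next-occurrence arrays, which are built in $O(kn|\Sigma|)$ total time, the full table is filled in $O(k|\Sigma|n^k)$ time and space, matching the claimed preprocessing bound. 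The table supports, in $O(|w|)$ lookups, the canonical embedding $\mathbf{p}(w)$ of a common subsequence $w$, defined as its coordinate-wise leftmost embedding in $S_1,\ldots,S_k$.

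Next, I would build an arborescence on the family of MCSs. A canonical root $r$ is produced by a greedy left-to-right extension with lexicographic tie-breaking on $\Sigma$, followed by a local patch that guarantees maximality. For a non-root MCS $w$, its parent $\pi(w)$ is defined by locating the rightmost position at which $w$ deviates from the greedy extension of its own prefix, and replacing the deviating suffix of $w$ by the canonical greedy completion from that point (again patched for maximality). A standard reverse-search argument shows that $\pi$ induces a tree of depth at most $n$. Enumeration proceeds by depth-first traversal of this arborescence, emitting each MCS when first visited.

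To bound the delay, I examine the children-generation step at a node $w$. A candidate child of $w$ is specified by choosing a branching position $i\le |w|$ and a successor position tuple in $[n]^k$ at which to restart the canonical completion, giving $O(n^{k+1})$ candidates in total. For each candidate I would (i) build its canonical embedding in $O(n^k)$ table lookups, (ii) verify maximality by checking, for every character and every gap of the embedding, whether an insertion is feasible in all $k$ strings simultaneously, at cost $O(k|\Sigma|n^k)$ via $\mathsf{nxt}$, and (iii) confirm that $\pi$ applied to the candidate equals $w$. Composing these costs yields $O(kn^{2k+1})$ work between consecutive outputs, matching the claim.

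The main obstacle is showing that the parent function $\pi$ is well-defined and that its inversion can be performed locally, without enumerating MCSs outside the subtree of $w$. For $k=2$ the canonical embeddings $\mathbf{p}(w)$ live in a two-dimensional grid where the "rightmost deviation" of $w$ has a clear monotone description; for $k\ge 3$ the coordinate-wise order on $[n]^k$ is only partial, with many incomparable tuples, so I anticipate breaking ties by the lexicographic order on tuples. This linear refinement preserves the key monotonicity property—namely, that greedy completion is order-consistent with the refined order—needed to guarantee that each MCS has a unique ancestor and that the child-certification check in step (iii) is local. Verifying that this tie-breaking rule integrates correctly with the maximality patch used in the root construction and with the suffix replacement in $\pi$ is where most of the technical work will lie.
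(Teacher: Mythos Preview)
Your plan diverges from the paper at the very first step: the Conte et al.\ algorithm you are lifting is \emph{not} reverse search but a binary-partition (prefix-extension) scheme, and the paper's proof keeps that structure. The central preprocessing object is the set $\mathcal U$ of \emph{unshiftables}: $k$-tuples $u=(u_1,\ldots,u_k)$ with $S_1[u_1]=\cdots=S_k[u_k]=:c(u)$ such that there is another unshiftable $v$ (base case $v=(|S_1|{+}1,\ldots,|S_k|{+}1)$) for which each $u_l$ is the \emph{rightmost} occurrence of $c(u)$ in $S_l[1,v_l{-}1]$. This backward recursive definition is evaluated in $O(k|\Sigma|n^k)$ time and gives $|\mathcal U|\le n^k$. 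Given a prefix $P$ already known to extend to some MCS, the extension set $Ext_P$ is the set of unshiftables occurring after $P$ that are not dominated (coordinate-wise) by another such unshiftable; it is computed by pairwise comparison in $O(kn^{2k})$ time. For each $u\in Ext_P$, the test ``does $P\cdot c(u)$ still extend to an MCS?'' reduces to checking whether $P$ is an MCS of the truncated instance $S_1[1,u_1{-}1],\ldots,S_k[1,u_k{-}1]$, a single call to the $O(k\,\isize\log\isize)$ maximality tester. With recursion depth $O(n)$ this yields the stated $O(kn^{2k+1})$ delay.

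The difference that matters is precisely the obstacle you flag. Because the paper builds MCSs forward one character at a time and certifies each extension by a local maximality check on truncated strings, it never needs a parent map, a canonical completion, or a maximality patch, and the partial-order difficulty on $[n]^k$ simply does not arise: domination among unshiftables is only used to \emph{prune} candidates, not to define a tree. Your reverse-search route might be made to work with the lexicographic tie-break you suggest, but you would still owe proofs that greedy completion plus patching always lands on an MCS and that $\pi$ applied to a generated child recovers the current node; the paper's binary partition sidesteps all of this with a much lighter argument.
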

\todo[inline]{TODO: Write some concluding remarks to link the positive results with the negative ones?}
We summarize both the negative and the positive results in \Cref{tab:neg:results}.

\begin{comment}
    Question: if the number of occurrences of each character is constant, is there an output-poly time algorithm for \MCS{}?
    \begin{itemize}
        \item It would still be interesting to consider the case where each input string has at most $k$ duplication (that is, each string can be obtained by adding at most $k$ characters from a string having distinct characters).
    \end{itemize}

    Question: does the reduction in \cite{maier1978complexity} also imply the hardness of LCS enumeration? Otherwise, we also have a reduction for LCS.
    Related: \cite{bulteau2022fpt} proposed an FPT algorithm for solving LCS on $k$ strings, and they claim it can enumerate all LCSs in $O((\Delta+1)^{\Delta+1} n k)$ total time, where $\Delta = n - \ell$ and $\ell$ is the length of an LCS.
    Kazuhiro: The hardness of finding one LCS implies the hardness of LCS enumeration. 
\end{comment}

\subsection{\MCS{} and \textsc{MCS Indexing} are at least as hard as \MCSAS{}}\label{sec:relation:hardness}
In this \changed{section}, we give a brief overview of the relationship between \emph{another solution} problems (also called finished decision problems~\cite{DBLP:phd/basesearch/Bokler18} and additional problems~\cite{BABIN2017316}) and enumeration problems.
Intuitively, to solve an enumeration problem, we can repeatedly solve its \emph{another solution} version until all solutions are found.
% However, the \NP-hardness of the \emph{another solution} version of an enumeration problem does not imply the hardness of that enumeration problem, as the former is more general than the subproblem needed for enumeration.
%However, it is not immediate to imply the hardness of the enumeration problem from the \NP-hardness of its another solution version, since the another solution problem would be more general than the subproblem that we should solve in the enumeration problem.
In this section we formally show that the \NP-hardness of the another solution problem indeed rules out an output-polynomial time algorithm for the enumeration problem, assuming \P $\neq$ \NP.
Although a similar discussion has already been used in several papers~\cite{DBLP:phd/basesearch/Bokler18,DBLP:journals/algorithmica/KhachiyanBEG08,DBLP:journals/dam/BorosM24,DBLP:journals/amai/BorosGKM03,10.1145/2629550}, we retrace it here in order to make our paper self-contained. %, this section explains it.
For the sake of simplicity, here we only focus on \MCS and \MCSAS{}.
% \MCSAS{} implies 
% This discussion applies to almost all of another solution problems and enumeration problems. 
% However, for the sake of simplicity, we only demonstrate that the \NP-hardness of 
% \MCSAS{} implies 
% no output-polynomial time algorithms for \MCS{} unless $\P = \NP$.
% Moreover, this discussion provides strong evidence for the conjecture that 
% no output-quasi-polynomial-time algorithms for \MCS{}.

\begin{theorem}\label{thm:mcs:mcsas}
    If there exists an output-polynomial time algorithm for \MCS{}, 
    then \MCSAS{} can be solved in polynomial time.
\end{theorem}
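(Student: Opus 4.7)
The plan is to use the hypothetical output-polynomial algorithm for \MCS{} as a black-box subroutine that decides \MCSAS{}. Fix an \MCSAS{} instance $(\mathcal{S}, \mathcal{Z})$, and let $p$ be the (monotone) polynomial such that the assumed enumeration algorithm terminates within $p(\|\mathcal{S}\| + N)$ steps, where $N$ denotes the total number of MCSs of $\mathcal{S}$. Set the time budget $T := p(\|\mathcal{S}\| + |\mathcal{Z}|)$, and simulate the enumeration algorithm on $\mathcal{S}$ step by step for at most $T$ operations, inspecting each MCS as soon as it is produced.

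The decision is made as follows. Whenever the algorithm outputs an MCS $w$, test in polynomial time whether $w \in \mathcal{Z}$ (a direct string comparison against each element of $\mathcal{Z}$ suffices); if not, halt the simulation and return \textsc{Yes}. If the enumeration terminates within the $T$-step budget and every emitted MCS was found in $\mathcal{Z}$, return \textsc{No}. If the budget is exhausted without the algorithm having terminated and without any MCS outside $\mathcal{Z}$ being produced, also return \textsc{Yes}.

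The crux of correctness is this last case. If the algorithm has not finished within $p(\|\mathcal{S}\| + |\mathcal{Z}|)$ steps, then by the output-polynomial guarantee we must have $p(\|\mathcal{S}\| + N) > p(\|\mathcal{S}\| + |\mathcal{Z}|)$, so $N > |\mathcal{Z}|$; but $\mathcal{Z}$ holds at most $|\mathcal{Z}|$ distinct MCSs, so by pigeonhole at least one MCS of $\mathcal{S}$ lies outside $\mathcal{Z}$, justifying the \textsc{Yes} answer. The other two cases are immediate from their own stopping conditions. The total running time is bounded by $T$ plus the per-step membership check, i.e.\ polynomial in $\|\mathcal{S}\| + |\mathcal{Z}|$, which is polynomial in the size of the \MCSAS{} input.

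The only subtlety — and not really an obstacle — is ensuring that the enumeration algorithm can indeed be ``paused'' after a prescribed number of elementary operations and that the per-output membership test against $\mathcal{Z}$ is polynomial; both are standard under the usual RAM model assumptions. Thus the theorem reduces to a direct unwinding of the definition of output-polynomial enumeration, combined with a pigeonhole step on $\mathcal{Z}$.
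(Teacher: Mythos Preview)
Your proposal is correct and follows essentially the same approach as the paper: run the output-polynomial enumerator with a time budget of $p(\|\mathcal S\| + |\mathcal Z|)$ and use the contrapositive of the running-time bound to conclude $N > |\mathcal Z|$ whenever the budget is exhausted. The only cosmetic differences are that the paper waits for full termination before comparing against $\mathcal Z$ (rather than inspecting outputs on the fly) and phrases the bound in terms of $\|\mathcal Z\|$ rather than $|\mathcal Z|$; neither affects the argument.
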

\begin{proof}
    Let $(\mathcal S, \mathcal Z)$ be an instance of \MCSAS{} and $\mathcal A$ be an output-polynomial time algorithm for \MCS{}.
    Since $\mathcal A$ runs in output-polynomial time, 
    the running time of $\mathcal A$ is bounded by $(\|\mathcal S\| + \|\mathcal M(\mS)\|)^c$, where $c$ is some constant and
    $\mathcal M(\mathcal S)$ is the set of maximal common subsequences of $\mathcal S$.
    We run $\mathcal A$ for $(\|\mathcal S\|+\|\mathcal Z\|)^c$ steps.
    If $\mathcal A$ terminates, we obtain $\mathcal M(\mathcal S)$, and it is easy to determine whether $\mathcal M(\mathcal S)$ contains a maximal common subsequence not contained in $\mathcal Z$.
    If $\mathcal A$ does not terminate within $(\|\mathcal S\|+\|\mathcal Z\|)^c$ steps, it implies that $\size{\mathcal M(\mathcal S)}$ is larger than $\size{\mathcal Z}$. 
    Therefore, $\mathcal Z \subset \mathcal M(\mS)$ and there is a maximal common subsequence not contained in $\mathcal Z$.
    % Notice that in this approach, we can determine that there is a maximal common subsequence not contained in $\mathcal Z$, but we cannot find such a maximal common subsequence.
    In both cases, we can solve \MCSAS{} in $(\|\mathcal S\|+\|\mathcal Z\|)^c$ time.
\end{proof}

\section{Preliminaries}
Let $\Sigma$ be an \emph{alphabet}.
An element in $\Sigma^*$ is a \emph{string}.
If $\size{\Sigma} = 2$, an element in $\Sigma^*$ is called a \emph{binary string}.
For a string $X$, we denote the length of $X$ as $\size{X}$.
The string with length zero is called an \emph{empty string}.
We denote the empty string as $\varepsilon$.

For strings $X$ and $Y$, $X\cdot Y$ denotes the concatenation of two strings.
As a shorthand notation, we may use $XY$ instead of $X\cdot Y$.
For a string $W = XY$, $X$ is a \emph{prefix} of $W$.
The $i$-th symbol of a string $X$ is denoted as $X[i]$, where $1 \le i \le \size{X}$.
For a string $X$ and two integers $1 \le i \le j \le \size{X}$, 
$X[i,j]$ denotes the \emph{substring} of $X$ that begins at position $i$ and ends at position $j$.
% For convenience, $x[i, j] = \varepsilon$ when $i > j$.
For a string $X$ and an integer $k$, $X^k$ is the string obtained by repeating $X$ $k$-times.
For two strings $X$ and $Y$, $Y$ is a \emph{subsequence} of $X$ if there is an increasing sequence of integers $(\delta_1, \ldots, \delta_{\size{Y}})$ such that $\delta_{1} < \dots < \delta_{\size{Y}}$, and $X[\delta_1]X[\delta_2]\ldots X[\delta_{\size{Y}}]$ equals $Y$.
Moreover, we call such integer sequence $(\delta_1, \ldots, \delta_{\size{Y}})$ an \emph{arrangement} of $Y$ for $X$.
In addition, we call that $X$ is a \emph{supersequence} of $Y$ or $X$ contains $Y$ as a subsequence.

For a set of $k$ strings $\mathcal S = \set{S_1,\ldots, S_k}$, we denote with $\isize$ the size of $\mS$: $\isize = \sum_{i=1}^k |S_i|$. We say that
a string $X$ is a \emph{common subsequence of $\mathcal S$} if 
for any $1 \le i \le k$, $S_i$ contains $X$ as a subsequence.
If any supersequence of $X$ is not common subsequence of $\mathcal S$, 
$X$ is an \emph{MCS of $\mathcal S$}.

\section{\NP-Hardness of \MCSAS{}}
\label{section:hardness-another}
% \todo[inline]{[I would remove this] From the discussion in the previous section, 
% it is sufficient to show the \NP-hardness of \MCSAS{} and \textsc{MCS Assessment} to obtain our hardness results for \MCS{} in \Cref{tab:neg:results}.}

In this section, we focus on proving \Cref{thm:hardnes:mcsas}. From this, the results of Corollaries~\ref{thm:hardnes:mcs}-\ref{thm:hardnes:mcsassess} will immediately follow. %we can prove our hardness result of \MCS{} by showing that no polynomial-time algorithm exists for \MCSAS{} unless $\P = \NP$.
%
% This section gives several hardness results of data mining problems.
% Specifically, we show the \NP-hardness of 
% \MCSAS{}, data structure generation problems, i.e., automata/SeqBDD version \MCSAS{}, 
% \textsc{MCS Assessment}, and
% \textsc{Shortest Maximal Common Subsequence}.
% From \Cref{sec:relation:hardness}, \NP-hardness of \MCSAS{} implies the \NP-hardness of \MCS{}.
% Moreover, we show that \NP-hardness of DAG/SeqBDD version \MCSAS{} also implies the \NP-hardness of DAG/SeqBDD version \MCS{}.
% In addition, our reduction shows that finding a non-longest maximal common subsequence corresponds to finding a satisfying assignment of 3-\SAT{}.
% Therefore, \textsc{Shortest Maximal Common Subsequence Problem} is also \NP-hard.
% In our reduction, enumerating all longest maximal common subsequences can be done in polynomial time.
% Therefore, a kind of counting problem, \textsc{MCS Assessment} is also \NP-hard.
%
% These hardness results can be shown based on the \NP-hardness of \MCSAS{}.
% Therefore, we first show the hardness of \MCSAS{}.
%
% \textbf{Notation proposal: }
% \begin{itemize}
%     \item $C_i = \ell^i_1 \lor \ell^i_2 \lor \ell^i_3$ with $\ell^i_1 \in \{x_h, \bar{x}_h\}$, $\ell^i_2 \in \{x_j, \bar{x}_j\}$, and $\ell^i_3 \in \{x_k, \bar{x}_k\}$ for some $1 \le h < j < k \le n$.
%     \item For each clause $C_i$, we define a string $S_i = R^{j-1} \ell^i_{1} R^{k - j} \ell^i_{2} R^{\ell - k} \ell^i_{3} R^{n-\ell}$, where $R = x_n\bar{x}_n \dots x_1\bar{x}_1$ (R because it is almost a reversal).
% \end{itemize}
% To show the \NP-hardness of \MCSAS{},
To prove the theorem, we give a reduction from 3-\SAT{} to \MCSAS, creating an instance of the latter where set $\mathcal{Z}$ has linear size.

Let $\phi$ be a formula in 3-Conjunctive Normal Form (CNF), i.e., composed of conjunctions of clauses, where each clause is made of a disjuction of exactly three literals.
We denote the total number of variables and the number of clauses as $v$ and $m$, respectively.
Without loss of generality, we assume that no clause contains both $x_i$ and $\bar{x}_i$, and no variable appears in every clause (as otherwise, we can split the problem into two easy 2-\SAT{} instances).
Formally, we consider $\phi = \bigwedge_{i=1}^m C_i$, where $C_i = \ell^{(i)}_1 \lor \ell^{(i)}_2 \lor \ell^{(i)}_3$ with $\ell^{(i)}_1 \in \{x_\alpha, \bar{x}_\alpha\}$, $\ell^{(i)}_2 \in \{x_\beta, \bar{x}_\beta\}$, and $\ell^{(i)}_3 \in \{x_\gamma, \bar{x}_\gamma\}$ for some $1 \le \alpha < \beta < \gamma \le v$. 

We construct the corresponding instance $(\mathcal S(\phi), \mathcal Z(\phi))$ of \MCSAS as follows.
%Consider a clause 
For each clause $C_i$ in $\phi$, 
we define a string $S_i = R^{\alpha-1} \ell^{(i)}_{1} R^{\beta - \alpha} \ell^{(i)}_{2} R^{\gamma - \beta} \ell^{(i)}_{3} R^{v-\gamma}$, where $R = x_{v}\bar{x}_{v} \dots x_1\bar{x}_1$.
Moreover, we define $S_0$ as $x_1\bar{x}_1 \ldots x_{v}\bar{x}_{v}$.
The set of strings $\mathcal S(\phi)$ is defined as $\{S_0, \dots, S_{m}\}$.
For each $1 \le i \le v$, 
we let $Z_i = x_1\bar{x}_1 \ldots x_{i-1}\bar{x}_{i-1} x_{i+1}\bar{x}_{i+1}\ldots x_{v}\bar{x}_{v}$, and define $\mathcal Z(\phi)$ as $\set{Z_1, \ldots, Z_{v}}$.
In other words, $Z_i$ is the string obtained by removing $\bar{x}_{i}x_{i}$ from $S_0$.
It is clear that, for each 3-\SAT{} instance $\phi$, this reduction constructs a corresponding instance of \MCSAS{} in polynomial time. \changed{Furthermore, note that $|\mathcal{Z}(\phi)| = v$ is linear in the size of the input. }
We denote the instance $(\mathcal S(\phi), \mathcal Z(\phi))$ as $\mathit{MCS}(\phi)$.
% In what follows, we simply denote $\mathcal S$ and $\mathcal Z$ in $\mathit{MCS}(\phi)$ as $\mathcal S(\phi)$ and $\mathcal Z(\phi)$, respectively.

Let us give an example of our reduction: let $\phi = (x_1 \lor \bar{x}_2 \lor \bar{x}_3) \land (x_2 \lor \bar{x}_3 \lor \bar{x}_4) \land (\bar{x}_1 \lor x_3 \lor x_4)$.
Then, the set $\mathcal S(\phi)$ is defined as follows (parentheses are added for readability, but they are not part of the strings):
\begin{itemize}
    \item[] $S_0 = x_1\bar{x}_1x_2\bar{x}_2x_3\bar{x}_3x_4\bar{x}_4$, 
    \item[] $S_1 = x_1(x_4\bar{x}_4x_3\bar{x}_3x_2\bar{x}_2x_1\bar{x}_1)
\bar{x}_2(x_4\bar{x}_4x_3\bar{x}_3x_2\bar{x}_2x_1\bar{x}_1)x_3(x_4\bar{x}_4x_3\bar{x}_3x_2\bar{x}_2x_1\bar{x}_1)$,
    \item[] $S_2 = (x_4\bar{x}_4x_3\bar{x}_3x_2\bar{x}_2x_1\bar{x}_1)x_2 (x_4\bar{x}_4x_3\bar{x}_3x_2\bar{x}_2x_1\bar{x}_1)\bar{x}_3(x_4\bar{x}_4x_3\bar{x}_3x_2\bar{x}_2x_1\bar{x}_1)\bar{x}_4$, and
    \item[] $S_3 = \bar{x}_1(x_4\bar{x}_4x_3\bar{x}_3x_2\bar{x}_2x_1\bar{x}_1)(x_4\bar{x}_4x_3\bar{x}_3x_2\bar{x}_2x_1\bar{x}_1)x_3(x_4\bar{x}_4x_3\bar{x}_3x_2\bar{x}_2x_1\bar{x}_1)x_4$,
\end{itemize}
and the set $\mathcal Z(\phi)$ is defined as follows:
\begin{itemize}
    \item[] $Z_1 = x_2\bar{x}_2x_3\bar{x}_3x_4\bar{x}_4$, 
    \item[] $Z_2 = x_1\bar{x}_1x_3\bar{x}_3x_4\bar{x}_4$,
    \item[] $Z_3 = x_1\bar{x}_1x_2\bar{x}_2x_4\bar{x}_4$, and
    \item[] $Z_4 = x_1\bar{x}_1x_2\bar{x}_2x_3\bar{x}_3$.
\end{itemize}

% Notice that, for readability, parentheses are sometimes added, but they are not part of the strings.
Notice that, in this example, $x_1x_2x_3x_4\bar{x}_4$ is an MCS of $\mathcal S(\phi)$, and $\phi$ is satisfied by the assignment $(x_1, x_2, x_3, \bar{x}_4) = (\mathrm{True}, \mathrm{True}, \mathrm{True}, \mathrm{False})$.

We first show that our reduction generates a valid instance of \MCSAS{}.
More precisely, we show that each string in $\mathcal Z(\phi)$ is an MCS of $\mathcal S(\phi)$.
% This proof aims to show that each string $Z_i \in \mathcal Z$ is a maximal common subsequence, and 
% if there is another common subsequence, then $\phi$ is satisfiable.
To this end, the following observation is trivial but helpful.

\begin{observation}\label{obs:subseq}
    % Let $S$ and $T$ be two strings and $i$ be the minimum integer satisfying $S[i] = T[1]$.
    % Then, $T$ is a subsequence of $S$ if and only if $T[2,\size{T}]$ is a subsequence of $S[i+1,\size{S}]$.
\changed{
    Let $S$ and $T$ be two strings.
    Fix a value $1\le j \le |T|$ and
     let $i$ be an index such that $T[1,j]$ is a subsequence of $S[1,i]$, but $T[1,j+1]$ is not.
    Then, $T$ is a subsequence of $S$ if and only if $T[j+1,\size{T}]$ is a subsequence of $S[i+1,\size{S}]$.
    }
\end{observation}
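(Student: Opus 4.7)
The plan is to invoke the standard greedy, leftmost-arrangement argument for subsequence matching. The ``if'' direction is immediate by concatenation: given an arrangement of $T[1,j]$ inside $S[1,i]$ and an arrangement of $T[j+1,|T|]$ inside $S[i+1,|S|]$, merging them produces a valid arrangement of $T$ in $S$, since every position used in the first segment precedes every position used in the second.

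For the ``only if'' direction, I would assume $T$ is a subsequence of $S$ via some arrangement $\delta_1 < \delta_2 < \dots < \delta_{|T|}$, and produce an arrangement of $T[j+1,|T|]$ whose positions all lie in $[i+1,|S|]$. Define $p_1 < p_2 < \dots < p_j$ as the \emph{leftmost} arrangement of $T[1,j]$ in $S$, obtained greedily by picking $p_s$ to be the smallest index strictly larger than $p_{s-1}$ with $S[p_s] = T[s]$. A standard exchange argument (comparing $p$ with $\delta$ position by position) yields $p_s \le \delta_s$ for all $s \le j$; in particular $p_j \le \delta_j < \delta_{j+1}$.

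The crux of the proof is showing $\delta_{j+1} > i$, because then $\delta_{j+1} < \delta_{j+2} < \dots < \delta_{|T|}$ all lie in $[i+1,|S|]$ and directly witness $T[j+1,|T|]$ as a subsequence of $S[i+1,|S|]$. Suppose for contradiction that $\delta_{j+1} \le i$. Since $T[1,j]$ is a subsequence of $S[1,i]$, the leftmost arrangement satisfies $p_j \le i$, and in fact $p_j < \delta_{j+1} \le i$. Then the sequence of positions $p_1 < p_2 < \dots < p_j < \delta_{j+1}$, all lying in $[1,i]$, witnesses $T[1,j+1]$ as a subsequence of $S[1,i]$, contradicting the hypothesis on $i$.

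The main obstacle is simply to justify the exchange inequality $p_s \le \delta_s$ cleanly; however, this is completely routine (proceed by induction on $s$, using that at each step the greedy choice picks the earliest admissible position, which must therefore appear no later than any other valid choice). Once that inequality is in place, the rest of the argument is bookkeeping on indices. Boundary cases ($j = |T|$, or $T[j+1,|T|] = \varepsilon$) are handled by convention: an empty string is a subsequence of any string, so the equivalence becomes trivial.
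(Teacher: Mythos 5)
Your proof is correct and follows essentially the same route as the paper: the ``if'' direction by concatenating arrangements, and the ``only if'' direction by taking an arrangement $(\delta_1,\dots,\delta_{|T|})$ of $T$ in $S$ and arguing $\delta_{j+1}>i$ by contradiction with the hypothesis that $T[1,j+1]$ is not a subsequence of $S[1,i]$. The only difference is that your leftmost-arrangement exchange argument is unnecessary machinery: if $\delta_{j+1}\le i$, the positions $\delta_1<\dots<\delta_{j+1}$ themselves already witness $T[1,j+1]$ as a subsequence of $S[1,i]$, which is exactly the shortcut the paper takes.
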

% \changed{(in this observation, the part in red is the new proposed version, alternative to the black one)}
\begin{proof}
\changed{
    If $T[j+1,\size{T}]$ is a subsequence of $S[i+1, \size{S}]$, and $T[1,j]$  is a subsequence of $S[1,i]$, then
    $T$ is a subsequence of $S$.}
    
    \changed{Let us now show the opposite direction.
    Since $T$ is a subsequence of $S$, 
    it has an arrangement $(\delta_1, \ldots, \delta_{\size{T}})$.
    Since $T[1,j+1]$ is not a subsequence of $S[1,i]$, we have that $\delta_{j+1} > i$.
    This means that $T[j+1,\size{T}]$ is a subsequence of $S[i+1,\size{S}]$ with $(\delta_{j+1},\ldots,\delta_{\size{T}})$ as an arrangement. % for $S[i+1$% no arrangement there is an arrangement $(\delta'_1,\ldots,\delta'_j)$.
    %As $(minprefix) \le i$, $S[i,\size{S}]$ is a subsequence of $S[(minprefix),\size{S}]$, so, by transitivity of the subsequence relation, we have the thesis.
    }
    % If $T[2,\size{T}]$ is a subsequence of $S[i+1, \size{S}]$, 
    % $T$ is a subsequence of $S$.
    % Therefore, we show the opposite direction.
    %    
    % Since $T$ is a subsequence of $S$, 
    % there is an arrangement $(\delta_1, \ldots, \delta_{\size{T}})$.
    % If there is an integer $j < \delta_1$ such that $S[j] = T[1]$, $(j, \delta_2, \ldots, \delta_{\size{T}})$ is also an arrangement.
    % Since $i$ is the minimum integer that satisfies $S[i] = T[1]$, 
    % $i \le \delta_1$ holds, and 
    % $T[2, \size{T}]$ is also a subsequence of $S[i+1, \size{S}]$.
\end{proof}

Now, we are ready to show that our reduction is valid.

\begin{lemma}\label{lem:redundant_mcs}
    Let $\phi = C_1 \land \dots \land C_{m}$ be a 3-\SAT{} formula. 
    Then, each $Z_i \in \mathcal Z(\phi)$ is an MCS of $\mathcal S(\phi)$. 
\end{lemma}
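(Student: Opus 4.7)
The plan is to verify the two defining conditions for $Z_i$ to be an MCS of $\mathcal{S}(\phi)$: that $Z_i$ is a subsequence of every string in $\mathcal{S}(\phi)$, and that no strict extension of $Z_i$ remains a common subsequence.

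Showing that $Z_i$ is a common subsequence is the routine part. Membership of $Z_i$ in $S_0$ is immediate from the definition of $Z_i$ as ``$S_0$ with $x_i\bar{x}_i$ deleted''. For a clause-string $S_k = R^{\alpha-1}\ell_1 R^{\beta-\alpha}\ell_2 R^{\gamma-\beta}\ell_3 R^{v-\gamma}$, I would discard the three literal characters to expose $R^{v-1}$ as a subsequence of $S_k$, and then use the fact that $Z_i$ is the concatenation of the $v-1$ pairs $x_j\bar{x}_j$ for $j\in\{1,\dots,v\}\setminus\{i\}$ in increasing order of $j$, embedding the $t$-th such pair inside the $t$-th copy of $R$ (every copy of $R$ contains every such pair).

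For maximality, I would first characterise the possible extensions. Any strict supersequence of $Z_i$ that is also a subsequence of $S_0$ can be obtained only by inserting $x_i$ and/or $\bar{x}_i$ (every other character of $S_0$ already appears once in $Z_i$, saturating its single available copy), and the order in $S_0$ forces the insertion to occur between the pairs of indices $i-1$ and $i+1$ (with the obvious adjustment at the endpoints). This leaves exactly three candidates: $Z_i^+$ with $x_i$ inserted, $Z_i^-$ with $\bar{x}_i$ inserted, and $S_0$ itself. Since $S_0$ is a supersequence of both $Z_i^+$ and $Z_i^-$, it suffices to exhibit one $S_j \in \mathcal{S}(\phi)$ of which neither $Z_i^+$ nor $Z_i^-$ is a subsequence. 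The hypothesis that no variable appears in every clause gives an index $j$ with $i \notin \{\alpha_j, \beta_j, \gamma_j\}$, which I would fix as the target.

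The main obstacle is ruling out $Z_i^+$ (and, symmetrically, $Z_i^-$) as a subsequence of this $S_j$, for which I would use a counting argument on variable indices, where the index of a character $x_m$ or $\bar{x}_m$ is simply $m$. The key structural observation is that any set of characters matched from a single copy of $R$ inside $S_j$ forms a subsequence both of $R$, whose variable indices are non-increasing, and of $Z_i^\pm$, whose variable indices are non-decreasing; hence these characters all share the same variable index, so each copy of $R$ can be ``credited'' to at most one index. Now for every $m \in \{1,\dots,v\}$ I claim that at least one character of index $m$ in $Z_i^\pm$ must be matched inside a copy of $R$: when $m = i$, the unique character of index $i$ cannot be absorbed by any literal because $i \notin \{\alpha_j,\beta_j,\gamma_j\}$; when $m \in \{\alpha_j,\beta_j,\gamma_j\}$, the two characters of index $m$ in $Z_i^\pm$ can be absorbed by at most one literal (since $C_j$ contains exactly one literal of index $m$); and when $m$ is any remaining index, no literal of $S_j$ has index $m$ at all. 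Summing over indices, at least $v$ distinct copies of $R$ would be required inside $S_j$, contradicting the fact that $S_j$ contains only $v-1$ copies of $R$.
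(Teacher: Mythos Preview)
Your proof is correct, and the route you take for maximality is genuinely different from the paper's.

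For the ``$Z_i$ is common'' part, both proofs are the same: drop the three literals from $S_j$ to expose $R^{v-1}$, then embed the $v-1$ pairs of $Z_i$ one per block.

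For maximality, the paper proceeds by case analysis on the position of $i$ relative to the three variable indices $\alpha<\beta<\gamma$ of the chosen clause $C_j$, repeatedly invoking \Cref{obs:subseq} to greedily peel off matched prefixes of $Z'$ and of $S_j$ until the remaining suffix of $S_j$ has too few copies of $R$ to host the remaining suffix of $Z'$. Your argument replaces this case split by a single pigeonhole count: since the variable-index sequence along $R$ is non-increasing while along $Z_i^\pm$ it is non-decreasing, any one copy of $R$ can absorb characters of at most one index; but each of the $v$ indices must deposit at least one character in some copy of $R$ (the three literals of $S_j$ cannot absorb both $x_m$ and $\bar x_m$ for $m\in\{\alpha,\beta,\gamma\}$, and absorb nothing for the remaining indices, including $m=i$), contradicting the fact that $S_j$ has only $v-1$ copies. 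Your argument is shorter and case-free; the paper's approach has the mild advantage of reusing \Cref{obs:subseq}, which it needs anyway for \Cref{lem:sufficient}, so no new device is introduced.
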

\begin{proof}
    % We first show that $Z_i$ is a common subsequence of $S_0, \ldots, S_m$.
    From the construction of $S_0$, $Z_i$ is a subsequence of $S_0$.
    For each $1 \le j \le m$, $S_j$ contains $Z_i$ as a subsequence since $S_j$ contains $R^{v-1}$, where $R = x_{{v}}\bar{x}_{{v}}\ldots x_1\bar{x}_1$.
    
    We show each $Z_i$ is maximal by contradiction.
    Let $Z' \neq Z_i$ be a \changed{common }subsequence of $\changed{\mathcal{S}(\phi)}$ %$S_0$
    that contains $Z_i$ as a subsequence.
    Since $Z'$ is a subsequence of $S_0$, it is of the form $Z' = x_1\bar{x_1}\ldots x_{i-1}\bar{x}_{i-1} X x_{i + 1}\bar{x}_{i + 1}, \ldots, x_{{v}}\bar{x}_{{v}}$, where $X$ is a subsequence of $x_{i}\bar{x}_{i}$.
    Since each $S_j$ does not contain $S_0$ as a subsequence, $X \neq x_{i}\bar{x}_{i}$.
    Therefore, suppose\changed{, without loss of generality,} that $X = \bar{x}_i$.
    Since there is no variable that appears in every clause, we can also assume that no literal appears in every clause; thus,
    there is a clause $C_j = \ell^{(j)}_1 \lor \ell^{(j)}_2 \lor \ell^{(j)}_3$ such that $\ell^{(j)}_1, \ell^{(j)}_2, \ell^{(j)}_3 \notin \set{x_i, \bar{x}_i}$.
    We show that $Z'$ is not a subsequence of $S_j$.
    Let $\alpha$, $\beta$ and $\gamma$ be the integers with $\alpha < \beta < \gamma$ such that 
    $\ell^{(j)}_1 \in \set{x_\alpha, \bar{x}_\alpha}$,
    $\ell^{(j)}_2 \in \set{x_\beta, \bar{x}_\beta}$, and
    $\ell^{(j)}_3 \in \set{x_\gamma, \bar{x}_\gamma}$.
    Suppose that $i < \alpha$.
    \changed{By \Cref{obs:subseq}, since $Z'[1,2(i-1)+1]=x_1\bar{x_1}\ldots x_{i-1}\bar{x}_{i-1}\bar{x}_i$ is a subsequence of $R^{i}$, but $Z'[1,2(i-1)+2]$ is not, we have that $Z'$ is a subsequence of $S_j$ if and only if $x_{i+1}\bar{x}_{i+1}, \ldots, x_{{v}}\bar{x}_{{v}}$ is a subsequence of 
    $R^{\alpha - i -1}\ell^{(j)}_1R^{\beta-\alpha}\ell^{(j)}_2R^{\gamma - \beta}\ell^{(j)}_3R^{v-\gamma}$.
    Since the number of repetitions of $R$ is $v-i -1$, 
    $R^{\alpha - i-1}\ell^{(j)}_1R^{\beta-\alpha}\ell^{(j)}_2R^{\gamma - \beta}\ell^{(j)}_3R^{v-\gamma}$ does not contain $x_{i+1}\bar{x}_{i+1}, \ldots, x_{{v}}\bar{x}_{{v}}$ as a subsequence.
    }
    %By applying \Cref{obs:subseq} recursively,
    %$Z'$ is a subsequence of $S_j$ if and only if $x_{i+1}\bar{x}_{i+1}, \ldots, x_n\bar{x}_n$ is a subsequence of 
    %$R^{\alpha - i-2}\ell^{(j)}_1R^{\beta-\alpha}\ell^{(j)}_2R^{\gamma - \beta}\ell^{(j)}_3R^{n-\gamma}$.
    %Since the number of repetitions of $R$ is $n-i-2$, 
    %$R^{\alpha - i-2}\ell^{(j)}_1R^{\beta-\alpha}\ell^{(j)}_2R^{\gamma - \beta}\ell^{(j)}_3R^{n-\gamma}$ does not contain $x_{i+1}\bar{x}_{i+1}, \ldots, x_n\bar{x}_n$ as a subsequence.
    Suppose that $\alpha \le i < \beta$.
    By applying the same approach, 
    $Z'$ is a subsequence of $S_j$ if and only if 
    $x_{\alpha}\bar{x}_{\alpha} \ldots Xx_{i+1}\bar{x}_{i+1}\ldots x_{v}\bar x_{v}$ is a subsequence of $\ell^{(j)}_1R^{\beta-\alpha}\ell^{(j)}_2R^{\gamma - \beta}\ell^{(j)}_3R^{v-\gamma}$.
    %Since $\ell^{(j)}_1 \not\in \set{x_\alpha, \bar x_{\alpha}}$, 
    \changed{Since $\ell^{(j)}_1 \neq x_\alpha\bar x_{\alpha}$, }
    removing $x_\alpha\bar{x}_\alpha$ and $\ell^{(j)}_1$ does not change the subsequence relationship.
    Thereafter, by the same argument as for $i < \alpha$, $Z'$ is not a subsequence of $S_j$.
    From the same discussion in other cases, $Z'$ is not a subsequence of $S_j$, regardless of $i$.
    % From a similar argument, $Z'$ is not a subsequence of $C_i$, regardless of $j$.
    % Since $Z_i$ is a subsequence of $(C')^{m-1}$, $Z_i$ is a common subsequence of $C_0, \ldots, C_m$.
    % \todo[inline]{GP: maximality should follow from the fact that no literal appears in every clause, and as such we cannot have $Z_i$ plus either $x_i$ or $\bar{x}_i$ as subsequence (that literal would need to appear in every clause)}
\end{proof}

\Cref{lem:redundant_mcs} proves that 
$\mathit{MCS}(\phi)$ is a valid instance of \MCSAS{}.
We next show that there is a common subsequence of $\mathcal S(\phi)$ that is not included in $\mathcal Z(\phi)$ if and only if $\phi$ is satisfiable\changed{, which needs a preliminary lemma}.
% We first show that $\phi$ is satisfiable if $\mathit{MCS}(\phi)$ has another maximal common subsequence.
% To this end, \Cref{lem:redundant_mcs} also showed that a maximal common subsequence $X \notin \mathcal Z(\phi)$ contains at least one of $x_i$ or $\bar{x}_i$ for each $1 \le i \le n$.
% If both of $x_i\bar{x}_i$ are not contained, it becomes a subsequence of a string in $\mathcal Z$.
% For a maximal common subsequence $X$, we define an assignment $f$.
% For a variable $x_i$, if $X$ contains $x_i$, then $f(x_i) = \mathrm{True}$.
% If $X$ contains $\bar{x}_i$ and does not contain $x_i$, then $f(x_i) = \mathrm{False}$.
% In the remainder of this section, we show that the resultant assignment $f$ is a satisfying assignment of $\phi$.

\begin{lemma}\label{lem:sufficient}
    % Let $X$ be a string that contains exactly one of $x_i$ or $\bar{x}_i$ for each $1 \le i \le v$ \changed{in increasing order} and $C_i = \ell^{(i)}_1 \lor\ell^{(i)}_2 \lor\ell^{(i)}_3$.
    Let $X$ be a subsequence of $S_0$ that contains exactly one of $x_j$ or $\bar{x}_j$ for each $1 \le j \le v$. Let $1\le i \le m$ and $C_i = \ell^{(i)}_1 \lor\ell^{(i)}_2 \lor\ell^{(i)}_3$.
    Then, 
    $X$ is a subsequence of $S_i$ if and only if
    $X$ does not contain $\bar{\ell}^{(i)}_1\bar{\ell}^{(i)}_2\bar{\ell}^{(i)}_3$ as a subsequence.
\end{lemma}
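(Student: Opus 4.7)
The plan is to build on a single structural observation about $R$: since $R = x_v\bar{x}_v \cdots x_1\bar{x}_1$ lists variable indices in strictly decreasing order, any subsequence of a single copy of $R$ whose letters have strictly increasing variable indices has length at most one. Because $X$ is a length-$v$ subsequence of $S_0 = x_1\bar{x}_1 \cdots x_v\bar{x}_v$ containing exactly one of $\{x_j, \bar{x}_j\}$ for each $j$, it may be written as $X = y_1 \cdots y_v$ with $y_j \in \{x_j, \bar{x}_j\}$, and its variable indices are strictly increasing from $1$ to $v$. Consequently, each copy of $R$ inside $S_i = R^{\alpha-1}\ell^{(i)}_1 R^{\beta-\alpha}\ell^{(i)}_2 R^{\gamma-\beta}\ell^{(i)}_3 R^{v-\gamma}$ can contribute at most one character to any arrangement of $X$ in $S_i$.

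For the forward direction I would argue by contrapositive. Assuming $X$ contains $\bar{\ell}^{(i)}_1 \bar{\ell}^{(i)}_2 \bar{\ell}^{(i)}_3$ as a subsequence forces $y_\alpha = \bar{\ell}^{(i)}_1$, $y_\beta = \bar{\ell}^{(i)}_2$, $y_\gamma = \bar{\ell}^{(i)}_3$, since these are the only positions of $X$ with the matching variable indices. The key point is that each inserted literal $\ell^{(i)}_k$ in $S_i$ can only be matched to a character of $X$ with the same variable index as $\ell^{(i)}_k$; the unique such character is $y_{j_k}$, which has the opposite sign. Thus no literal in $S_i$ can be used in any arrangement, so all $v$ characters of $X$ would have to be absorbed by the $(\alpha-1)+(\beta-\alpha)+(\gamma-\beta)+(v-\gamma) = v-1$ copies of $R$. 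The one-character-per-$R$ bound then forces $v \le v-1$, a contradiction.

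For the backward direction I would exhibit an explicit arrangement. Avoiding $\bar{\ell}^{(i)}_1 \bar{\ell}^{(i)}_2 \bar{\ell}^{(i)}_3$ as a subsequence of $X$ is equivalent to having $y_{j_{k^*}} = \ell^{(i)}_{k^*}$ for some $k^* \in \{1,2,3\}$, where $(j_1,j_2,j_3) = (\alpha,\beta,\gamma)$. The plan is to pick any such $k^*$, use $\ell^{(i)}_{k^*}$ to match $y_{j_{k^*}}$, skip the other two literals, and consume each $R$-block of $S_i$ with the next character of $X$ in order. For $k^*=1$, this reads: match $y_1,\ldots,y_{\alpha-1}$ one-per-$R$ inside $R^{\alpha-1}$, then $y_\alpha$ via $\ell^{(i)}_1$, then $y_{\alpha+1},\ldots,y_\beta$ inside $R^{\beta-\alpha}$, skip $\ell^{(i)}_2$, match $y_{\beta+1},\ldots,y_\gamma$ inside $R^{\gamma-\beta}$, skip $\ell^{(i)}_3$, and finally $y_{\gamma+1},\ldots,y_v$ inside $R^{v-\gamma}$; the analogous constructions for $k^*=2,3$ differ only in which literal is retained. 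Each $R$ trivially contains the single $y_j$ required of it (since $R$ contains every literal), and the character counts add up to exactly $v$. The main obstacle here is purely bookkeeping: each skipped literal shifts the starting index of the subsequent $R^k$ block by one, so the three cases must be written out separately, but each case reduces to a direct count verification.
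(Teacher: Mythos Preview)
Your proposal is correct. Both directions go through as you describe: the key structural observation that a single copy of $R$ can absorb at most one letter of $X$ (because variable indices in $R$ are non-increasing while those in $X$ are strictly increasing) is sound, and your forward-direction pigeonhole count---$v$ letters to place, only $v-1$ copies of $R$ available, and the three isolated literal positions in $S_i$ are all unusable because the unique candidate $y_{j_k}$ has the wrong sign---is airtight. The backward direction is a routine explicit arrangement, and your bookkeeping is accurate.

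The paper takes a slightly different route for the forward direction. Rather than a global counting argument, it invokes \Cref{obs:subseq} (the greedy prefix-matching observation) three times, successively ``peeling off'' $X[1,\alpha-1]$ into $R^{\alpha-1}$, then $X[\alpha,\beta-1]$ into $\ell^{(i)}_1 R^{\beta-\alpha}$ (using that $X[\alpha] \neq \ell^{(i)}_1$ to discard the literal), and so on, until the residual suffix of $X$ is one letter too long for the remaining $R$-block. Your counting argument is more self-contained: it does not need \Cref{obs:subseq} at all, only the single-line observation about index monotonicity in $R$, and it avoids the sequential peeling. Conversely, the paper's backward direction is a bit more compact than yours: instead of writing out three cases, it simply notes that $S_i$ contains $R^{\omega-1}\ell^{(i)}_{k^*}R^{v-\omega}$ as a subsequence (with $\omega \in \{\alpha,\beta,\gamma\}$), which immediately absorbs $X$; this collapses your three cases into one line.
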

\begin{proof}
    Let $\alpha$, $\beta$, and $\gamma$ be the integers with $\alpha < \beta < \gamma$ that satisfy
    $\ell^{(i)}_1 \in \set{x_\alpha, \bar{x}_\alpha}$,
    $\ell^{(i)}_2 \in \set{x_\beta, \bar{x}_\beta}$, and
    $\ell^{(i)}_3 \in \set{x_\gamma, \bar{x}_\gamma}$.
    Observe that $S_i[1 + 2v(\alpha - 1)] = \ell_1^{(i)}$, $S_i[2 + 2v(\beta - 1)] = \ell_2^{(i)}$, and $S_i[3 + 2v(\gamma - 1)] = \ell_3^{(i)}$. 

    Suppose that $X$ does not contain $\bar{\ell}^{(i)}_1\bar{\ell}^{(i)}_2\bar{\ell}^{(i)}_3$ as a subsequence, that is, 
    $X$ contains one of $\ell^{(i)}_1$, $\ell^{(i)}_2$, or $\ell^{(i)}_3$.
    \changed{Suppose that $X[\omega] = \ell^{(i)}_j$ for some $(j,\omega) \in \{(1,\alpha),(2,\beta),(3,\gamma)\}$.}
    \changed{Since $S_i$ contains $R^{\omega-1}\ell^{(i)}_jR^{v-\omega}$, and $R$ contains the whole alphabet, $X$ is a subsequence of $S_i$. }
    %Since $S_i$ contains 
    %$R^{\alpha-1}\ell^{(i)}_1R^{n - \alpha}$,
    %$R^{\beta-1}\ell^{(i)}_2R^{n - \beta}$, and
    %$R^{\gamma-1}\ell^{(i)}_3R^{n - \gamma}$ as subsequences.
    %Therefore, $X$ is a subsequence of $S_i$ since $R$ contains all alphabet.
    
    We show the opposite direction.
    \changed{Suppose by contradiction that $\bar{\ell}^{(i)}_1\bar{\ell}^{(i)}_2\bar{\ell}^{(i)}_3$ is a subsequence of $X$.
    By \Cref{obs:subseq}, $X$ is a subsequence of $S_i$ if and only if $X[\alpha, \size{X}]$ is a subsequence of $S_i[1+2v(\alpha-1),\size{S_i}]$, as 
    $X[1,\alpha-1]$ is a subsequence of $S_i[1,2v(\alpha-1)] = R^{\alpha - 1}$ but $X[1,\alpha]$ is not.
    %By considering $X[\alpha,|X|]$ and $S_i[1+2n(\alpha-1),|S_i|]$, and 
    %By repeating this argument twice, 
    As $X[\alpha] \neq S_i[1+2v(\alpha-1)]$ and $X[\beta] \neq S_i[2+2v(\beta-1)]$, we can repeat the same argument: $X$ is a subsequence of $S_i$ if and only if $X[\beta,\size{X}]$ is a subsequence of $S_i[2+2v(\beta-1),\size{S_i}]$ which holds if and only if
    $X[\gamma,\size{X}]$ is a subsequence of $S_i[3+2v(\gamma-1),\size{S_i}]$.
    However, $X[\gamma, \size{X}]$ is not a subsequence of $S_i[3+2v(\gamma-1),\size{S_i}] = \ell^{(i)}_3R^{v-\gamma}$, as $X[\gamma] \neq \ell^{(i)}_3$ and $|X[\gamma, |X|]| = v-\gamma+1$ contains more literals than the $v-\gamma$ repetitions of $R$ in $S_i[3+2v(\gamma-1),\size{S_i}]$, a contradiction. %  as $|X[\gamma, |X|]| \ge |\ell_3^{(i)}R^{v-\gamma}|$ but $X[\gamma] \neq \ell_3^{(i)}$, a contradiction. %so $X$ is not a subsequence of $S_i$.
    %The shortest prefix of $S_i$ that contains $X[1,\alpha-1]$ is $S[1,1+2n(\alpha-2)+2(n-(\alpha-1)) (+1 \text{ if $X[\alpha-1]$ is negated})]$
%        
    }
    %From \Cref{obs:subseq}, $X$ is a subsequence of $S_i$ if and only if 
    %$X' = X[\alpha, \size{X}]$ is a subsequence of $S'_i = S_i[2n\alpha, \size{S_i}]$.
    %Since $X'[1] = \bar{\ell}^{(i)}_1$, $X'$ is a subsequence of $S'_i$ if and only if $X'[2, \size{X'}]$ is a subsequence of $S'_i[2n + 1, \size{S'_i}]$.
    %By repeating this procedure twice, $X$ is a subsequence of $S_i$ if and only if 
    %$X[\gamma+1, \size{X}]$ is a subsequence of $S_i[2n\gamma + 3, \size{S_i}]$.
    %However, since $S_i[2n\gamma + 3, \size{S_i}] = R^{n-\gamma-1}$, $X[\gamma+1, \size{X}]$ is not a subsequence of $S_i$.
\end{proof}

\begin{lemma}
    Let $\phi$ be a 3-CNF formula.
    Then, $\phi$ is satisfiable if and only if $\mathcal S(\phi)$ has an MCS that is not contained in %a string in 
    $\mathcal Z(\phi)$.
\end{lemma}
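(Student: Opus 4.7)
The plan is to prove both directions by linking MCSs to truth assignments through the structure of $S_0$. Any common subsequence $M$ of $\mathcal{S}(\phi)$ must in particular be a subsequence of $S_0 = x_1\bar{x}_1\cdots x_v\bar{x}_v$, so it decomposes uniquely as $M = T_1 T_2 \cdots T_v$ with each $T_j \in \{\varepsilon, x_j, \bar{x}_j, x_j\bar{x}_j\}$. The correspondence to truth assignments will go through strings where each $T_j$ is a single literal, which is exactly the hypothesis of \Cref{lem:sufficient}.

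For the forward direction, assume $\phi$ has a satisfying assignment and let $X = t_1 t_2 \cdots t_v$, where $t_j = x_j$ if $x_j$ is true and $t_j = \bar{x}_j$ otherwise. Clearly $X$ is a subsequence of $S_0$. For each clause $C_i$, some literal $\ell^{(i)}_p$ is satisfied, hence appears in $X$, which means $\bar{\ell}^{(i)}_p$ does not appear in $X$ (since $X$ has exactly one literal per variable). By \Cref{lem:sufficient}, $X$ is a subsequence of $S_i$, so $X$ is a common subsequence. Extend $X$ to an MCS $M$ of $\mathcal{S}(\phi)$. If $M$ were some $Z_i$, then $X$ would be a subsequence of $Z_i$, contradicting the fact that $Z_i$ contains neither $x_i$ nor $\bar{x}_i$ while $X$ contains exactly one of them. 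Hence $M \notin \mathcal{Z}(\phi)$.

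For the backward direction, let $M$ be an MCS with $M \notin \mathcal{Z}(\phi)$, and write $M = T_1 \cdots T_v$ as above. First show $T_j \neq \varepsilon$ for every $j$: otherwise $M$ is a subsequence of $Z_j$, and since $Z_j$ is itself a common subsequence (in fact an MCS, by \Cref{lem:redundant_mcs}), maximality of $M$ forces $M = Z_j$, contradicting $M \notin \mathcal{Z}(\phi)$. Next, define $M'$ by replacing every $T_j = x_j\bar{x}_j$ with just $x_j$ (or just $\bar{x}_j$, either works); $M'$ is a subsequence of $M$ hence still a common subsequence, and now has exactly one literal per variable. Applying \Cref{lem:sufficient} in the nontrivial direction to each $S_i$, $M'$ does not contain $\bar{\ell}^{(i)}_1 \bar{\ell}^{(i)}_2 \bar{\ell}^{(i)}_3$ as a subsequence; but since $M'$ contains each variable's literal exactly once in index order, and $\alpha < \beta < \gamma$, this is equivalent to saying at least one of $\ell^{(i)}_1, \ell^{(i)}_2, \ell^{(i)}_3$ appears in $M'$. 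Thus setting $x_j = \True$ iff $x_j$ appears in $M'$ satisfies every clause.

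The main technical subtlety, and the reason one cannot work with $M$ directly, is the possibility that $T_j = x_j\bar{x}_j$: in that case $M$ is incompatible with the ``exactly one literal per variable'' hypothesis of \Cref{lem:sufficient}. The key observation is that the pruning step $M \to M'$ can be done arbitrarily for such $j$ without losing the common-subsequence property, since removing characters from a subsequence keeps it a subsequence; once we have $M'$ with exactly one literal per variable, \Cref{lem:sufficient} delivers the clause-satisfaction condition directly. Everything else reduces to bookkeeping about $S_0$ and $Z_j$.
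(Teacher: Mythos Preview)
Your proof is correct and follows essentially the same approach as the paper: both directions hinge on \Cref{lem:sufficient} applied to a string with exactly one literal per variable, obtained either directly from a satisfying assignment or by pruning an MCS $M \notin \mathcal Z(\phi)$. Your exposition is slightly more explicit than the paper's in spelling out the decomposition $M = T_1\cdots T_v$ and in invoking maximality together with \Cref{lem:redundant_mcs} to rule out $T_j = \varepsilon$, but the underlying argument is the same.
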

\begin{proof}
    We first show that
    if $\phi$ is satisfiable, then $S(\phi)$ has an MCS that is not contained in $\mathcal Z(\phi)$.
    Let $f$ be a satisfying assignment of $\phi$.
    Let $S = s_1s_2, \ldots, s_{v}$ be the string obtained by $s_j = x_j$ if $f(x_j) = \rm{True}$, otherwise, $s_j = \bar{x}_j$.
    Note that $S$ is a subsequence of $S_0$.
    Since $f$ is a satisfying assignment, each $C_i = \ell^{(i)}_1 \lor \ell^{(i)}_2 \lor\ell^{(i)}_3$ has a literal $\ell^{(i)}_j \in \set{x_\alpha, \bar{x}_\alpha}$ that is assigned \rm{True}.
    Thus, $S$ is a subsequence of $S_i$ for each $1\le i \le m$ by matching $s_\alpha$ with $\ell^{(i)}_j$\changed{ and each other literal with one occurrence of $R$}.
    Since $S$ is not a subsequence of any string $\mathcal Z(\phi)$, $\mathcal S(\phi)$ has an MCS that is not contained in $\mathcal Z(\phi)$.

    We next show the opposite direction.
    Suppose that there is a common subsequence $X$ of $\mathcal S(\phi)$ that is not contained in $\mathcal Z(\phi)$.
    Since each string $Z_i$ contains both $x_j$ and $\bar{x}_j$ except for $j = i$, we have
    that $X$ contains at least one of $x_j$ or $\bar{x}_j$ for each $1 \le j \le v$.
    %By \Cref{lem:sufficient}, $X$ contains at least one of $\ell_1^{(i)}$, $\ell_2^{(i)}$, or $\ell_3^{(i)}$ for each clause $C_i = \ell_1^{(i)} \lor \ell_2^{(i)} \lor \ell_3^{(i)}$.
    %By setting $f(x_j) = \mathrm{True}$ if $X$ contains $x_j$, and $f(x_j) = \mathrm{False}$ otherwise, then $\phi$ is satisfiable.
    % If $X$ does not contain $x_i$\changed{ or }$\bar{x}_i$ for each $1 \le i \le v$, $X$ is \changed{a subsequence of a string in $\mathcal Z(\phi)$}. %contained in a string $\mathcal Z(\phi)$. 
    % Hence, $X$ contains at least one of $x_i$ or $\bar{x}_i$ \changed{ in increasing order}.
    Let $X'$ be a subsequence of $X$ such that, for each $1 \le j \le v$, $X'$ contains \changed{exactly} one of $x_j$ or $\bar{x}_j$.
    From \Cref{lem:sufficient},
    for each $C_i = \ell^{(i)}_1 \lor \ell^{(i)}_2 \lor \ell^{(i)}_3$,
    $X'$ does not contain $\bar{\ell}^{(i)}_1\bar{\ell}^{(i)}_2\bar{\ell}^{(i)}_3$ as a subsequence.
    From $X'$, we obtain an assignment $f$:
    if $X'$ contains $x_j$, $f(x_j) = \mathrm{True}$, and otherwise $f(x_j) = \mathrm{False}$.
    From \Cref{lem:sufficient}, $f$ is a satisfying assignment of $\phi$.
\end{proof}

\changed{Note that \MCSAS is in \NP, as we can verify in polynomial time that a given string is not contained in the input set $\mathcal Z$ and that it is indeed maximal for $\mathcal S$ \cite{hirota2023fast}}.
%We obtain the \NP-completeness of \MCSAS{} from this bijection.
Overall, the provided reduction proves the \NP-completeness of \MCSAS{}, and we obtain:

\hardnessMCSAS*

% \todo[inline]{I think }
% The above theorem immediately shows the \NP-hardness of \textsc{MCS Assessment}.
% For any 3-\SAT{} instance $\phi$ with $v$ literals, we have $|\mathcal Z(\phi)| = v$.
% By setting $z = v$, polynomial-time algorithms for \textsc{MCS Assessment} solve \MCSAS{} in polynomial time as well.

% \hardnessMCSASSESS*

As mentioned in \Cref{section:contribution}, \MCSAS{} can be solved output-polynomial time under the assumption that \MCS{} admits an output-polynomial time algorithm.
This together with \Cref{thm:hardnes:mcsas} implies that \MCS{} has no output-polynomial time algorithm unless $\P = \NP$.
Similarly, \textsc{MCS Indexing} also has no output-polynomial time algorithm. % from \Cref{thm:hardness:index}.
Finally, the \NP-hardness of \textsc{MCS Assessment} with linear sized $z$ follows as well.
Indeed, for any 3-\SAT{} instance $\phi$ with $v$ literals, we have $|\mathcal Z(\phi)| = v$, and by setting $z = v$, polynomial-time algorithms for \textsc{MCS Assessment} solve \MCSAS{} in polynomial time as well. Thus, Corollaries~\ref{thm:hardnes:mcs}-\ref{thm:hardnes:mcsassess} are proved. 
% \hardnessMCSASSESS*

We believe that these hardness results provide strong evidence for the conjecture that \MCS{} and \textsc{MCS Indexing} do not have an output-quasi-polynomial time algorithm unless $\P = \NP$.
If \MCS{} or \textsc{MCS Indexing} had an output-quasi-polynomial time algorithm, then \NP-complete problem could be solved in quasi-polynomial time.
This would imply that any problem in \NP{} could be solved in quasi-polynomial time.

\section{Hardness of \MCS{} on a binary alphabet}
\label{section:binary-hardness}
The reduction described in the previous section only works for strings on unbounded alphabet size. A natural question that arises is whether the problem becomes easier when the alphabet is small.
% A natural restriction is to consider the input strings to be binary.
In this section, we show that \MCS{}, and other related problems, are still \changed{challenging} 
%``not easy'' 
even when the input strings are binary.
More precisely,
we give a one-to-one reduction from \MIS{} to \MCS{} for binary strings, similar to the one used in~\cite[Proposition 1]{10.1007/978-3-642-31265-6_11}.

Before explaining our reduction, we must first introduce some terms related to hypergraphs. \changed{A hypergraph is a pair $\mathcal H = (V, \mathcal E)$, where $ \mathcal E \subseteq \mathcal{P}(V)$.} For a hypergraph $\mathcal H = (V, \mathcal E)$, a set of vertices $U$ is \emph{independent} if
for any hyperedge $E \in \mathcal E$, $E \not\subseteq U$ holds.
An independent set $U$ is a \emph{maximal independent set} of $\mathcal H$ if any proper subset $W \supset U$ is not independent.
The task of \MIS{} is to output all maximal independent sets in $\mathcal H$.
An output-quasi-polynomial time enumeration of \MIS{} is proposed by Fredman and Khachiyan~\cite{FREDMAN1996618}.
No output-polynomial time algorithm for \MIS{} is known so far, and its existence is considered to be a long-standing open problem in this field.
% Whether this problem can be solved output-polynomial time is a long-standing open problem.
% In this section, for a hypergraph $\mathcal H = (V, \mathcal E)$,
% we denote $|V|$ and $|\mathcal E|$ as $n$ and $m$, respectively.
In what follows, we regard $V$ as the set of integers $\set{1, \ldots, \size{V}}$ to simplify the discussion.

Let $\mathcal H = (V, \mathcal E)$ be a hypergraph.
To simplify our proof, we assume that $\mathcal H$ has at least two vertices.
Furthermore, we can assume that $\mathcal H$ has no independent sets with cardinality $\size{V}-1$.
% The former assumption does not make the enumeration problem easier.
% For a hypergraph $\mathcal H$, 
Indeed, let us add a vertex $v$ that 
% contains 
is contained in all hyperedges, and let us denote as $\mathcal H'$ the resulting hypergraph.
It is easy to verify that the collection of maximal independent sets containing $v$ in $\mathcal H'$ corresponds to the collection of maximal independent sets in the original hypergraph $\mathcal H$.
Moreover, every maximal independent set that does not contain $v$ in $H'$ contains all the vertices in $V$, meaning that $V$ itself is the only such maximal independent set.
If we have an output-polynomial time algorithm for such restricted hypergraphs, \MIS{} can be solved in output-polynomial time for general hypergraphs.
Therefore, without loss of generality, we can assume that $\mathcal H$ has no independent sets of cardinality $\size{V} - 1$.  \changed{This assumption is equivalent to assuming that there exists no vertex belonging to every hyperedge.}

We now give our reduction from \MIS{} to \MCS{} (see~\Cref{fig:reduction} for an example).
We first define $S_0$ as $(01)^{\size{V}}$.
\changed{Let $E_i =\set{u_1, \ldots, u_h}$ be a hyperedge in which for each $1\le j \le \size{E_i}$, $u_j \in V = \set{1,\dots,n}$. Suppose without loss of generality that $u_j < u_{j+1}$ for $1 \le j < \size{E_i}$. For each hyperedge $E_i$,}
we define string $S_i = T_1\ldots T_{\size{V} + \size{E_i} - 1}$,
where $T_j = 0$ if \changed{$j = u_k+k-1$}
%$j$ satisfies $j = u_{k} - u_{k-1} + 1$ 
for some $1 \le k \le \size{E_i}$, otherwise $T_j = 01$.
In other words, we start to build $S_i$ from $(01)^{\size{V}-1}$, and for each $u_j \in E_i$, we add a further $0$ before the $u_j$-th occurrence of $01$ of the string, or at the end of the string for $u_j= |V|$. As an example, consider $E_3=\{3,4,5\}$ in the hypergraph of \Cref{fig:reduction}: to build the corresponding $S_3$ we start from $(01)^4 = 01010101$, and we add two zeros before the third and fourth one, and a last zero in the last position of the string, yielding $S_3 = 0101\underline{0}01\underline{0}01\underline{0}$.
%For simplicity, we define $u_0 = 1$.
%% $ (01)^{u_1 - 1}0(01)^{u_2 - u_1}0 \ldots 0(01)^{n - u_h}$.
A key observation of this construction is that $S_i$ contains $(01)^{j - 1}0(01)^{\size{V} - j}$ as a subsequence for each $j$.
We denote the set of strings $S_0, \ldots, S_{\size{\mathcal E}}$ by $\mathcal S(\mathcal H)$.
This reduction can be done in polynomial time.
% We give an example of our reduction in \Cref{fig:reduction}.

\changed{Next, we show that} for a set of binary strings $\mathcal S(\mathcal H)$,
any MCS does not contain $11$ as a substring.
\begin{observation}\label{obs:11}
    Any MCS of $\mathcal S(\mathcal H)$ does not contain $11$ as a substring.
\end{observation}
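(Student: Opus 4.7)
The plan is to derive a contradiction by a local-augmentation argument: if some MCS $X$ contained $11$ as a substring, we could insert a $0$ between those two $1$s to obtain a strictly longer common subsequence $X'$, contradicting the maximality of $X$.

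The key structural property to establish first is that in every string of $\mathcal{S}(\mathcal{H})$, any two $1$s are separated by at least one $0$. For $S_0 = (01)^{|V|}$ this is immediate, since $1$s occur only at even positions and odd positions are all $0$. For $S_i$ it follows directly from the construction: $S_i$ is a concatenation $T_1 T_2 \cdots$ where every block $T_j$ is either $0$ or $01$. Thus every $1$ that appears in $S_i$ sits at the end of a block of the form $01$, and therefore is immediately preceded by a $0$; in particular, between any two $1$-positions in $S_i$ there must lie at least one $0$.

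Now suppose for contradiction that $X$ is an MCS of $\mathcal{S}(\mathcal{H})$ that contains $11$ as a substring, so $X = X_1 \cdot 11 \cdot X_2$ for some (possibly empty) strings $X_1, X_2$. Set $X' := X_1 \cdot 101 \cdot X_2$. I claim $X'$ is a common subsequence of $\mathcal{S}(\mathcal{H})$. Fix any $T \in \mathcal{S}(\mathcal{H})$ and any arrangement of $X$ into $T$; let $p < q$ be the positions of $T$ to which the two $1$s of the substring $11$ are mapped, so $T[p] = T[q] = 1$. By the structural property above, there exists some index $r$ with $p < r < q$ and $T[r] = 0$. Extending the arrangement by mapping the inserted $0$ of $X'$ to position $r$ (and keeping all other assignments unchanged) gives a valid arrangement of $X'$ into $T$, as required.

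Since $X$ is a proper subsequence of $X'$, and $X'$ is a common subsequence of $\mathcal{S}(\mathcal{H})$, this contradicts the maximality of $X$. I expect no serious obstacle here: the only thing to check carefully is the structural property of the $S_i$, which is essentially by construction, since extra $0$s are only ever inserted, never extra $1$s, so the alternating $\ldots 01 \ldots 01 \ldots$ pattern of $1$-occurrences being preceded by $0$s is preserved.
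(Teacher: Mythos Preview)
Your proof is correct and follows essentially the same approach as the paper's: both observe that every string in $\mathcal{S}(\mathcal{H})$ has a $0$ between any two $1$s, and then use this to insert a $0$ between the two consecutive $1$s of $X$, contradicting maximality. Your version is simply more explicit about constructing $X'$ and verifying the arrangement.
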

\begin{proof}
    Let $X$ be an MCS such that $X[i, i+1] = 11$ \changed{for some $i$.} %as a substring.
    % Since $X$ is a common subsequence of $\mathcal S(\mathcal H)$, for any $S\in \mathcal S(\mathcal H)$
    For any $S\in \mathcal S(\mathcal H)$,
    there is an arrangement $(\delta_1, \ldots, \delta_{\size{X}})$ 
    such that $S[\delta_j] = X[j]$ for all $1 \le j \le \size{X}$.
    By construction, in each string of $\mathcal S(\mathcal H)$, there is at least one occurrence of $0$ between any two occurrences of $1$.
    %$0$ appears between any $1$ and $1$.    
    Therefore, $S[\delta_i, \delta_{i+1}]$ contains $101$ as a subsequence, which contradicts the maximality.
\end{proof}

\changed{Note that any MCS must also be a subsequence of $S_0 = (01)^{\size{V}}$, so it can be obtained by deleting some characters from $S_0$. Specifically, by \Cref{obs:11}, when we delete a $0$ we must also delete one of its adjacent $1$s; otherwise such subsequence would not be maximal, as it would contain $11$ as a substring.}

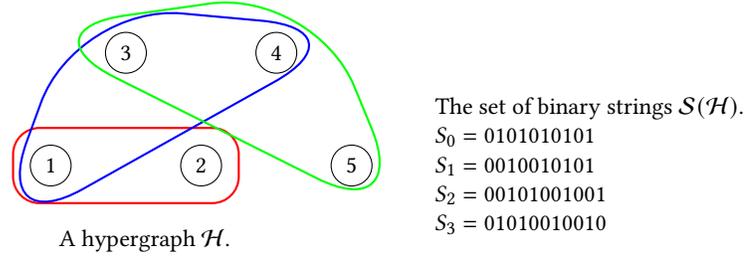
\begin{figure}[t]
    \centering
    \begin{tikzpicture}
        \node[draw, circle] (1) at (0, 0) {1};
        \node[draw, circle] (2) at (2, 0) {2};
        \node[draw, circle] (3) at (1, 1.5) {3};
        \node[draw, circle] (4) at (3, 1.5) {4};
        \node[draw, circle] (5) at (4, 0) {5};
    
        \draw[rounded corners=10pt, thick, red] 
            ($(1) + (-0.5,-0.5)$) rectangle ($(2) + (0.5,0.5)$);
    
         \draw[thick, blue, rounded corners=40pt] 
            ($(1) + (-0.8,-1.0)$) -- ($(3) + (-0.65,0.65)$) -- ($(4) + (1.2,0.3)$) -- cycle;

        \draw[thick, green, rounded corners=40pt] 
            ($(3) + (-1.4,0.3)$) -- ($(4) + (0.5,0.85)$) -- ($(5) + (0.8,-0.8)$) -- cycle;

        \node[anchor=west,align=left] at ($(1) + (0, -1.0)$) 
        {A hypergraph $\mathcal H$.};
        \node[anchor=west,align=left] at ($(5) + (1, 0)$) 
        { The set of binary strings $\mathcal S(\mathcal H)$. \\
         $S_0 = 0101010101$  \\ 
         $S_1 = 0010010101$  \\
         $S_2 = 00101001001$ \\
         $S_3 = 01010010010$};
    \end{tikzpicture}
    \caption{An example of our reduction. 
    \changed{Let $E_1 = \set{1, 2}$, $E_2 = \set{1,3,4}$, and $E_3 = \set{3,4,5}$.}
    The maximal independent sets in $\mathcal H$ are $\set{1,3,5}$, $\set{1, 4, 5}$, $\set{2, 3, 4}$,\changed{ $\set{2,4,5}$} and $\set{2, 3, 5}$.
    The MCSs in $S_0$, $S_1$, $S_2$, and $S_3$ are $01010101$, $01001001$, $01000101$, $00101010$, \changed{$00100101$} and $00101001$.}
    \label{fig:reduction}        
\end{figure}

% \begin{figure}[t]
%     \centering
%     \includegraphics[width=0.5\linewidth]{reduction01.pdf}
%     \caption{An example of our reduction. The set of maximal independent sets in $\mathcal H$ is $\set{1,3,5}$, $\set{1, 4, 5}$, $\set{2, 3, 4}$, and $\set{2, 3, 5}$.
%     The set of maximal common subsequences in $S_0$, $S_1$, $S_2$, and $S_3$ is $01010101$, $01001001$, $01000101$, $00101010$, and $00101001$.}
%     \label{fig:reduction}
% \end{figure}

Let $\mathcal X$ be the set of MCSs in $\mathcal S(\mathcal H)$.
The goal of our proof is to show that there is a bijection between the collection of maximal independent sets in $\mathcal H$ and 
$\mathcal X \setminus \set{(01)^{\size{V}-1}}$.
We first show that $(01)^{\size{V}-1}$ is an MCS of $\mathcal S(\mathcal H)$ under our assumption.
% if $\mathcal H$ has no independent sets with the cardinality $\size{V}-1$ and $\mathcal H$ has at least two vertices.
\todo{Ale: Kazuhiro, can you address this issue?

Reviewer comment: For Lemma 4, the hypothesis you need is that for every vertex v in the hypergraph, there is some hyperedge that does not contain v. You should state this explicitly.

Our rebuttal: As for the hypothesis of Lemma 4, we will formalize more explicitly the equivalence between "no independent set of size |V| -1" and "no vertex belonging to every hyperedge" which is discussed on page 10, and include the latter as hypotesis in the statement of the lemma. 
}
\todo{Kazuhiro: OK, this revision is complete.}
\begin{lemma}\label{obs:maximal}
    Suppose that $\mathcal H$ has at least two vertices and has no independent set of size $|V|-1$, \changed{that is, no vertex belonging to every hyperedge}.
    Then, the string $(01)^{\size{V}-1}$ is an MCS of $\mathcal S(\mathcal H)$.
\end{lemma}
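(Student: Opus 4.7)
The plan is to verify separately that $(01)^{\size{V}-1}$ is a common subsequence of $\mathcal{S}(\mathcal{H})$, and that no strict supersequence of it is a common subsequence. The first part is immediate: $(01)^{\size{V}-1}$ is a subsequence of $S_0=(01)^{\size{V}}$ by dropping a single ``$01$'', and each $S_i$ with $i\ge 1$ is built as the baseline $(01)^{\size{V}-1}$ with additional ``$0$'' tokens inserted, so matching the baseline ``$01$'' tokens of $S_i$ in order yields the required embedding.

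For maximality, suppose for contradiction that $Y$ is a common subsequence of $\mathcal{S}(\mathcal{H})$ that strictly contains $(01)^{\size{V}-1}$ as a subsequence, so $|Y|\ge 2\size{V}-1$. Since $Y$ must be a subsequence of $S_0=(01)^{\size{V}}$, we have $|Y|\in\{2\size{V}-1, 2\size{V}\}$. If $|Y|=2\size{V}$ then $Y=(01)^{\size{V}}$, which has $\size{V}$ ones and therefore cannot embed into any $S_i$ (these contain exactly $\size{V}-1$ ones by construction). Hence $|Y|=2\size{V}-1$, so $Y$ is $S_0$ with a single character removed. Removing a ``$1$'' again leaves $\size{V}$ ones, ruled out in the same way, so $Y$ must be obtained by removing a ``$0$'', i.e., $Y=Y_j := (01)^{j-1}\cdot 0\cdot (01)^{\size{V}-j}$ for some $1\le j\le \size{V}$.

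The crux is to characterize when $Y_j$ embeds in $S_i$ for $i\ge 1$. Since both strings contain exactly $\size{V}-1$ ones, any embedding is forced to match the $k$-th ``$1$'' of $Y_j$ with the $k$-th ``$1$'' of $S_i$ in order. Under this alignment, the zeros of $Y_j$ lying in the $k$-th ``gap'' between consecutive ones (with the natural convention for the gap before the first one and after the last) must be absorbed by zeros of $S_i$ in the same gap. Comparing $Y_j$ with the baseline $(01)^{\size{V}-1}$, the string $Y_j$ has exactly one extra zero with respect to the baseline, located in gap $j-1$ (numbering gaps $0,\ldots,\size{V}-1$ from left to right, so that the case $j=\size{V}$ gives an extra zero in the final gap). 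Symmetrically, by the construction of $S_i$, an extra ``$0$'' token lies in gap $u-1$ of $S_i$ precisely when $u\in E_i$. Consequently $Y_j$ embeds into $S_i$ if and only if $j\in E_i$.

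Combining the above, $Y_j$ is a common subsequence of $\mathcal{S}(\mathcal{H})$ if and only if $j\in E_i$ for every $E_i\in\mathcal{E}$, i.e., $j$ belongs to every hyperedge. By the standing hypothesis that $\mathcal{H}$ has no independent set of cardinality $\size{V}-1$ -- equivalently, no vertex lies in every hyperedge -- no such $j$ exists, so no $Y_j$ is common and $(01)^{\size{V}-1}$ is an MCS. The main bookkeeping obstacle is making the ``one extra zero in a specific gap'' argument rigorous: one should explicitly define the zero-count profile $b^{(j)}$ of $Y_j$ and $c^{(i)}$ of $S_i$, then verify that $Y_j$ embeds in $S_i$ if and only if $b^{(j)}\le c^{(i)}$ componentwise, which reduces the embedding check to the single coordinate in which $b^{(j)}$ exceeds the baseline.
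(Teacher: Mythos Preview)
Your argument is correct and follows the same line as the paper: reduce any strict common supersequence of $(01)^{\size{V}-1}$ inside $S_0$ to the form $(01)^{j-1}0(01)^{\size{V}-j}$, then use the hypothesis to find some $S_i$ into which it does not embed. One slip to fix: removing a ``$1$'' from $S_0$ leaves $\size{V}-1$ ones, not $\size{V}$; it is removing a ``$0$'' that leaves $\size{V}$ ones, and your displayed $Y_j$ is in fact what results from deleting the $j$-th ``$1$'', so swap the words ``$0$'' and ``$1$'' in that sentence and the argument goes through unchanged. Your gap-profile characterization (showing $Y_j$ embeds in $S_i$ if and only if $j\in E_i$) is a clean alternative to the paper's greedy-arrangement step, which only establishes the direction actually needed.
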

\begin{proof}
    % From the construction of $\mathcal S(\mathcal H)$,
    Since for each $1 \le i \le |\mathcal E|$, $S_i$ is obtained from $(01)^{|V| + |E_i| - 1}$ by replacing $|E_i|$ occurrences of $01$ with one $0$ each,
    it has $(01)^{\size{V}-1}$ as a subsequence.
    %First, we show that $(01)^{\size{V}-1}$ is a maximal common subsequence of $\mathcal S(\mathcal H)$.
    %Let $X$ be a subsequence of \changed{$S_0$} %$\mathcal S(\mathcal H)$
    By contradiction, let $X$ be a common subsequence of $\mathcal S(\mathcal H)$ 
    that contains $(01)^{\size{V}-1}$ as a proper subsequence, so that $(01)^{\size{V}-1}$ is not maximal.
    %Since $X$ is a subsequence of $S_0$ and contains $(01)^{n-1}$ as a subsequence,
    Since $X$ is a subsequence of $S_0$, it can be represented by $(01)^\alpha W(01)^\beta$ for some $W$, where $\alpha + \beta = \size{V}-1$ and $W$ is either $0$ or $01$ from \Cref{obs:11}.
    \changed{However, since each $S_i$ contains exactly $|V|-1$ ones, it follows that $W=0$ and $X = (01)^\alpha 0(01)^\beta$.}
    %However, $X$ is not a common subsequence of \changed{$\mathcal S (\mathcal H )$} if $W = 01$.
    %This follows from the fact that each $S_i$ contains exactly $|V| - 1$ ones, which implies that it does not contain $(01)^{\size{V}}$ as a subsequence.
    %% This follows from the fact that \changed{for each $1 \le i \le \size{\mathcal E}$,} $S_i$ does not contain $(01)^{\size{V}}$ as a subsequence.
    %% Indeed, the number of $1$s in $S_i$ is just $n-1$.
    %Hence we have $W = 0$, so $X = (01)^\alpha 0(01)^\beta$.
    
    By hypothesis $\mathcal H$ has no independent set with cardinality $\size{V}-1$, so
    there is a hyperedge $E_i$ contained in $V \setminus \set{\alpha + 1}$.
    In other words, $E_i$ does not contain vertex $\alpha + 1$.
    %We show that $S_i$ does not contain $X$ as a subsequence.

    %Suppose by contradiction that $X$ is a subsequence of $S_i$.
    We consider a greedy arrangement of $X$ for $S_i$. 
    In this arrangement, the $\alpha$-th $1$ in $X$ is arranged to the $\alpha$-th $1$ in $S_i$. \changed{This is possible by \Cref{obs:11}.}
    Since $E_i$ does not contain $\alpha + 1$, the two characters after the $\alpha$-th $1$ in $S_i$ are $01$.
    However, three characters after the $\alpha$-th $1$ in $X$ is $001$.
    To arrange $001$, we need $0101$ or $01001$ in $S_i$, \changed{hence we consume one more $1$}.
    Therefore, after arranging $001$ to $S_i$, the remaining occurrences of $1$s in $S_i$ are $\size{V} - \alpha - 3$.
    However, the remainder of $X$ contains $\size{V}-\alpha-2$ occurrences of $1$s, so the greedy arrangement is not possible and $X$ cannot be a subsequence of $S_i$, a contradiction. % and the number of $1$s is insufficient to make arranging possible.
\end{proof}

In what follows, we show a bijection between the collection of maximal independent sets in $\mathcal H$ and the set of MCSs without $(01)^{\size{V}-1}$.
Let $E_i = \set{u_1, \ldots, u_h}$ be a hyperedge in $\mathcal H$.
This hyperedge $E_i$ forbids any independent set in $\mathcal H$ from containing all vertices in $E_i$.
In our reduction,
the role of $S_i$ is to forbid a common subsequence in $\mathcal S(\mathcal H)$ containing 
$R(E_i) = R_1R_2\ldots R_{|V|}$ as a subsequence, where $R_j = 01$ if $j \in E_i$, and $R_j = 0$ otherwise.

\begin{lemma}\label{lemma:forbid}
    % Let $E = \set{u_1, \ldots, u_h}$ be a set of integers \changed{such that for each $1\le i < h$, $u_i < u_{i+1} \le \size{V}$,}
    % and \changed{$S_E = T_1\ldots T_{\size{V} + \size{E} - 1}$} %$S_E = S_1\ldots S_{n + \size{E} - 1}$
    % be a string, where \changed{$T_i = 0$} %$S_i = 0$ 
    % if $i$ satisfies \changed{$i=u_j + j -1$} %$i = u_{j} - u_{j-1} + 1$ 
    % for some $1 \le j \le \size{E}$, otherwise \changed{$T_i = 01$.} %$S_i = 01$. Note that we define $u_0 = 0$.
    % Then, $S_E$ does not contain $R(E) = R_1R_2\ldots R_n$ as a subsequence, where $R_i = 01$ if $i \in E$, otherwise, $R_i = 0$.
    Let $E_i = \set{u_1, \ldots, u_h}$ be a hyperedge in $\mathcal H$.
    Then, $S_i$ does not contain $R(E_i) = R_1R_2\ldots R_n$ as a subsequence.
\end{lemma}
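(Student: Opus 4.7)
The plan is to proceed by contradiction: assume $R(E_i)$ is a subsequence of $S_i$ and follow the leftmost (greedy) embedding, showing that it must exhaust $S_i$ before all of $R(E_i)$ is matched. Since a greedy embedding exists if and only if any embedding does, this suffices.

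I first set up the bookkeeping. Writing $o_l := u_l + l - 1$ for the block indices of the single-character $0$ blocks in $S_i = T_1 \cdots T_{|V|+|E_i|-1}$, a direct count shows that the $\alpha$-th $1$ of $S_i$ lies in block $T_{a_\alpha}$ with $a_\alpha = \alpha + \max\{l : u_l \le \alpha\}$; in particular $a_{u_k} = u_k + k$. Inside $R(E_i)$, the $k$-th $1$ sits at position $u_k + k$, with exactly $u_k$ zeros before it and exactly $u_{k+1}-u_k$ zeros between the $k$-th and $(k+1)$-th $1$.

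Next, I prove by induction on $k$ that the greedy embedding maps the $k$-th $1$ of $R(E_i)$ to the $1$ of $T_{u_k+k}$ in $S_i$, for $k = 1, \ldots, |E_i|$ whenever $u_k \le |V|-1$. The base case $k=1$ is direct: the first $u_1 - 1$ zeros of $R(E_i)$ greedily match the zeros of the $01$ blocks $T_1, \ldots, T_{u_1 - 1}$, the next zero is absorbed by the inserted $0$ block $T_{u_1}$, and the following $1$ matches the $1$ of $T_{u_1+1}$. For the inductive step, the key structural observation is that the block indices strictly between $a_{u_{k-1}} = u_{k-1}+k-1$ and $a_{u_k} = u_k+k$ contain exactly one $0$-block (namely $o_k = u_k + k - 1$) and $u_k - u_{k-1} - 1$ $01$-blocks; this follows since $o_l \le o_{k-1} = u_{k-1}+k-2 < u_{k-1}+k$ for $l < k$ and $o_l \ge o_{k+1} = u_{k+1}+k > u_k + k - 1$ for $l > k$. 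Greedy therefore absorbs one zero from each intermediate $01$-block, then the extra zero from $T_{o_k}$, matching exactly the $u_k - u_{k-1}$ zeros that separate the two consecutive $1$s of $R(E_i)$, and finally places the next $1$ at the $1$ of $T_{u_k+k}$.

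Finally, I derive the contradiction at $k = |E_i|$. If $u_{|E_i|} \le |V| - 1$, then after greedy embedding of all $|E_i|$ $1$s we are at the $1$ of $T_{u_{|E_i|}+|E_i|}$. Since $o_{|E_i|} = u_{|E_i|} + |E_i| - 1$ is the largest $0$-block index, the remaining suffix $T_{u_{|E_i|}+|E_i|+1} \cdots T_{|V|+|E_i|-1}$ of $S_i$ consists only of $01$-blocks, supplying just $|V| - u_{|E_i|} - 1$ zeros: one short of the $|V| - u_{|E_i|}$ trailing zeros of $R(E_i)$ still to match. If instead $u_{|E_i|} = |V|$, the target block $T_{u_{|E_i|}+|E_i|}$ would lie beyond the end of $S_i$, so greedy fails to place the last $1$. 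The main obstacle is the bookkeeping confirming that exactly one $0$-block lies between each pair of indices $a_{u_{k-1}}, a_{u_k}$; once this is settled, the final deficit is a one-line counting check.
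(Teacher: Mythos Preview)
Your proof is correct and rests on the same left-to-right greedy matching idea as the paper. The only difference is packaging: the paper inducts on $|E_i|$, using Observation~\ref{obs:subseq} to peel off the prefix up through the first element and reduce to the shifted hyperedge $E'=\{u_2-u_1,\ldots,u_h-u_1\}$, whereas you unroll this into explicit block-index bookkeeping and induct on the position $k$ of the $1$ being matched; both reach the same one-zero deficit at the end.
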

\begin{proof}
    % We prove the lemma by induction on the cardinality of $E$.
    % If $E$ consists of only one integer, the statement holds.
    % %e.g. E = {3,4}, n = 5, S_E = 0101001001, R = 0001010
    % %e.g. E = {4-3}, n = 4, S_E = 001, R = 010
    % As an induction step, assume that $E$ contains $k$ integers.
    % We obtain an arrangement greedily if and only if $S$ contains $R$ as a subsequence.
    % Therefore, each $0$ in $R[1, u_1-1]$ is arranged to each $0$ in $S_E[1, 2u_1-2]$.
    % Moreover, $0$ and $1$ in $R[u_1, u_1+1]$ are arranged to $0$ and $1$ in $S_E[2u_1-1, 2u_1+1]$.
    % The remainder strings are $R[u_1+2, \size{R}]$ and $S[2u_1+2, \size{S_E}]$.
    % For a set of integers $E' = \set{u_2 - u_1, u_3 - u_1, \ldots, u_k - u_1}$, 
    % we obtain the same strings $R[u_1+2, \size{R}]$ and $S_E[2u_1+2, \size{S_E}]$ and $\size{E'} < k$.
    % From the induction hypothesis, $S_E[2u_1+2, \size{S_E}]$ does not contain $R[u_1+2, \size{R}]$ as a subsequence.
    We prove the lemma by induction on the cardinality of $E$.
    If $E_i$ consists of only one vertex $u_1$, %the statement holds.
    $R(E_i) = 0^{u_1 - 1}010^{\size{V} - u_1}$ and
    $S_i = (01)^{u_1-1}0(01)^{\size{V} - u_1}$.
    In this case, $0^{u_1 - 1}0$ is a subsequence of $(01)^{u_1 - 1}0$.
    Moreover, $0^{u_1 - 1}01$ is not a subsequence of $(01)^{u_1 - 1}0$.
    From \cref{obs:subseq}, $R(E_i)$ is a subsequence of $S_i$ if and only if $T' = 10^{\size{V} - u_1}$ is a subsequence of $S' = (01)^{\size{V} - u_1}$. 
    Thus, $R(E_i)$ is not a subsequence of $S_i$.
    
    %e.g. E = {3,4}, n = 5, S_E = 0101001001, R = 0001010
    %e.g. E = {4-3}, n = 4, S_E = 001, R = 010
    As an induction step, assume that $E_i$ contains $h > 1$ vertices with $u_1 < u_2 < \dots < u_h$.
    Let $R = R(E_i)$.
    % We obtain an arrangement greedily if and only if $S$ contains $R$ as a subsequence.
    % Therefore, each $0$ in $R[1, u_1-1]$ is arranged to each $0$ in $S_i[1, 2u_1-2]$.
    Since $R[1, u_1] = 0^{u_1 - 1}0$, 
    $R[1, u_1]$ is a subsequence of 
    $S_i[1, 2u_1-1] = (01)^{u_1 - 1}0$.
    However, $R[1, u_1+1] = 0^{u_1 - 1}01$ is not a subsequence of $S_i[1, 2u_1-1]$.
    Therefore, from \Cref{obs:subseq},
    $R$ is a subsequence of $S_i$ if and only if $R[u_1+1, \size{R}]$ is a subsequence of $S_i[2u_1, \size{S_i}]$.
    % Therefore, each $0$ in $R[1, u_1-1]$ is arranged to each $0$ in $S_i[1, 2u_1-2]$.    
    % Moreover, $0$ and $1$ in $R[u_1, u_1+1]$ are arranged to $0$ and $1$ in $S_i[2u_1-1, 2u_1+1]$.
    % The remainder strings are $R[u_1+2, \size{R}]$ and $S_i[2u_1+2, \size{S_i}]$.
    For a set of vertices $E' = \set{u_2 - u_1, u_3 - u_1, \ldots, u_h - u_1}$, 
    we obtain the same strings $R[u_1+1, \size{R}]$ and $S_i[2u_1, \size{S_i}]$.
    Since $\size{E'} < h$, $S_i[2u_1, \size{S_i}]$ does not contain $R[u_1+1, \size{R}]$ as a subsequence.
\end{proof}

From \Cref{obs:11} and \Cref{lemma:forbid}, any MCS different from $(01)^{\size{V}-1}$ contains $\size{V}$ \changed{occurrences of} $0$ and
does not contain $R(E)$ as a subsequence for each $E \in \mathcal E$.
We are ready to prove that there is a bijection between the \changed{collection} of maximal independent sets \changed{in $\mathcal H$} and the set of MCSs \changed{of $\mathcal S(\mathcal H)$} different from $(01)^{\size{V}-1}$:
\begin{lemma}\label{lem:cs:is}  
    Let $\mathcal X$ be the set of MCSs of $\mathcal S(\mathcal H)$. Then, there is a bijection $\psi$ \changed{from the collection of maximal independent sets in $\mathcal H$ to $\mathcal X \setminus \set{(01)^{\size{V}-1}}$}.
     % between $\mathcal X \setminus \set{(01)^{\size{V}-1}}$ and the \changed{collection} of maximal independent sets in $\mathcal H$.
\end{lemma}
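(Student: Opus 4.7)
Define $\psi(U) := X_U$, where $X_U := B_1 B_2 \cdots B_{|V|}$ with $B_j = 01$ if $j \in U$ and $B_j = 0$ if $j \notin U$. The plan is to verify $\psi$ is a bijection in three stages: (a) each $X_U$ is an MCS distinct from $(01)^{|V|-1}$; (b) $\psi$ is injective; and (c) every MCS $X \neq (01)^{|V|-1}$ arises as $X_U$ for some maximal independent $U$.

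For (a), I first derive a block decomposition $S_i = s_{i,1} \cdots s_{i,|V|}$ directly from the construction of $S_i$, where $s_{i,j} = 001$ for $j \in E_i \setminus \{\max E_i\}$, $s_{i,j} = 0$ for $j = \max E_i$, and $s_{i,j} = 01$ otherwise. To show $X_U$ is a subsequence of $S_i$, I match $B_j$ to $s_{i,j}$ block-by-block; this works directly whenever $\max E_i \notin U$, while if $\max E_i \in U$, I shift the matching by one block using the ``extra $0$'' in some $s_{i,u^*} = 001$ with $u^* \in E_i \setminus U$ (which exists because $U$ is independent). For maximality of $X_U$ when $U$ is maximal independent, any common subsequence $X'$ strictly containing $X_U$ must have exactly $|V|$ zeros (as $X_U$ already does and $S_0$ bounds this), so $X' = X_{U'}$ for some $U' \supsetneq U$; \Cref{lemma:forbid} then forces $U'$ to be independent, since otherwise some hyperedge $E \subseteq U'$ would make $R(E) = X_E$ a subsequence of $X_{U'}$, contradicting the lemma. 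Max-independence of $U$ rules out such $U'$, so $X_U$ is maximal. And $X_U$ has $|V|$ zeros while $(01)^{|V|-1}$ has $|V|-1$, so they are distinct.

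Injectivity (b) is immediate, since $U = \{j : B_j = 01\}$ can be read off from $X_U$.

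Surjectivity (c) is the main obstacle. Given an MCS $X \neq (01)^{|V|-1}$, fix an arrangement of $X$ in $S_0 = (01)^{|V|}$ and let $Z, K \subseteq V$ be the sets of vertex positions whose $0$ and $1$ of $S_0$ (respectively) are used. If $Z = V$, then $X$ decomposes uniquely as $X_K$, and by the argument of (a), $K$ must be maximal independent. Otherwise $Z \subsetneq V$, and I split further. If some position contributes nothing, i.e., $p \in V \setminus (Z \cup K)$, then $X$ embeds into the $|V|-1$ remaining pairs of $S_0$, so $X$ is a subsequence of $(01)^{|V|-1}$; since $(01)^{|V|-1}$ is itself a common subsequence by \Cref{obs:maximal}, maximality forces $X = (01)^{|V|-1}$, contradicting the exclusion. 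The remaining sub-case --- every position contributes (so $V \setminus Z \subseteq K$) and there are ``$1$-only'' positions at $K \setminus Z \neq \emptyset$ --- is the delicate one. Here $X$ is block-wise a subsequence of $X_K$, so if $K$ were independent, $X_K$ would be a common subsequence strictly extending $X$, contradicting maximality; hence some hyperedge $E \subseteq K$. The hard step is then to derive a contradiction by showing $X$ cannot actually be a subsequence of $S_E$: since $S_E$ has no $1$ at block $u_h = \max E$ but $X$ does, one must shift the matching of $X$'s $1$s in $S_E$, and a careful character count using the no-$11$ constraint of \Cref{obs:11} (which forces each $1$-only position of $X$ to be flanked by $0$-only blocks) yields an obstruction analogous to \Cref{lemma:forbid}. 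I expect this sub-case analysis, verifying that no shifted matching can succeed in any configuration, to be the most intricate part of the proof.
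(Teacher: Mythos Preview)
Your parts (a) and (b) are fine, and your Case~1 of (c) is exactly what is needed. The problem is sub-case~2b. Your claim that $X$ cannot be a subsequence of $S_E$ when $E\subseteq K$ is false. Take $|V|=5$, hyperedge $E=\{4,5\}$, and the arrangement with $Z=\{1,2,3,5\}$, $K=\{4,5\}$ (so $K\setminus Z=\{4\}$ and $Z\cup K=V$); then $X=B_1\cdots B_5=0\cdot 0\cdot 0\cdot 1\cdot 01=000101$, while $S_E=0101010010$, and $X$ \emph{is} a subsequence of $S_E$ (match at positions $1,3,5,6,7,9$). So your intended contradiction does not materialise. Of course this $X$ is not an MCS --- it is a proper subsequence of the common subsequence $(01)^4$ --- but that is precisely the point: in this configuration the obstruction to maximality is not that $X$ fails to be common, but that it is dominated by another common subsequence. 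Your ``no shifted matching can succeed'' programme is aiming at the wrong target.

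A related structural issue is that your case split depends on the arrangement you fix: the same $X$ (e.g.\ $000101$ above) lands in sub-case~2a under the greedy arrangement and in sub-case~2b under another, so ``fix an arrangement'' is doing real work that you have not controlled. The paper sidesteps all of this by arguing directly that every MCS $X\neq(01)^{|V|-1}$ has exactly $|V|$ zeros, after which $X$ decomposes uniquely as $X_K$ and your Case-1 reasoning finishes the proof. The zero-count argument uses \Cref{obs:11} plus one further easy observation: every $S_i$ begins with $0$, so an MCS cannot begin with $1$ (else prepending $0$ would yield a strictly larger common subsequence). A binary string with no $11$ that begins with $0$ and has $q$ zeros is a subsequence of $(01)^q$; hence if $q<|V|$ it is a subsequence of $(01)^{|V|-1}$, and maximality together with \Cref{obs:maximal} forces equality. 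This replaces your entire sub-case-2 analysis with two lines.
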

\begin{proof}
    We define a function $\psi$ on a larger domain, the \changed{collection} of \changed{all} independent sets in $\mathcal H$, \changed{and then prove that its restriction of $\psi$ on the collection of maximal independent sets in $\mathcal H$ is an intended bijection.}
    For an independent set $U$, 
    $\psi(U)$ is defined as $X_1\ldots X_{\size{V}}$, where $X_i = 01$ if $i \in U$, 
    otherwise, $X_i = 0$.
    We show that $\psi(U)$ is a common subsequence of $\mathcal S(\mathcal H)$.
    From the definition of $\psi(U)$, $\psi(U)$ is a subsequence of $S_0$.
    Thus, we show that $\psi(U)$ is contained in $S_i$ \changed{as a subsequence} for $1 \le i \le \size{\mathcal E}$.
    
    Since $U$ is an independent set \changed{in $\mathcal H$, we have} $E_i \setminus U \neq \emptyset$.
    Let $h$ be the minimum integer in $E_i \setminus U$.
    From the construction of $S_i$, 
    $S_i$ contains $Y = (01)^{h-1}0(01)^{\size{V} - h}$.
    We decompose $Y$ into $Y_1 \ldots, Y_{\size{V}}$, where $Y_j = 01$ \changed{for each $j$ with} $j \neq h$ and $Y_h = 0$.
    For each \changed{$1\le j \le \size{V}$, $X_j$ is a subsequence of $Y_j$.} %$X_j$ and $Y_j$, $Y_j$ is a subsequence of $X_j$.
    Therefore, $\psi(U)$ is a subsequence of $S_i$.
    We next show that if $U$ is maximal, $\psi(U)$ is also maximal. 
    %If $\phi(U)$ is non-maximal, there is a common subsequence $T$ of $\mathcal S(\mathcal H)$ that contains $\phi(U)$ as a subsequence.
    \changed{Suppose by contradiction that $\psi(U)$ is non-maximal. This means that there is a string $T$ that is a supersequence of $\psi(U)$ and a subsequence of $S_0$. Since $\psi(U)$ contains $\size{V}$ occurrences of $0$, $T$ must be obtained by adding some $1$s to $\psi(U)$.
    Without loss of generality, let $T$ be the string that is obtained by adding only one $1$ to $\psi(U)$, between $X_j$ and $X_{j+1}$, say, for a given $j$ such that $X_j=0$. By construction, $\psi(U\cup \set{j}) = T$. By maximality of $U$, there exists a hyperedge $E \subseteq U \cup \set{j}$. Then, $T$ contains $R(E)$ as a subsequence.}
    From \Cref{lemma:forbid}, $T$ is not a common subsequence of $\mathcal S(\mathcal H)$.
    
    \changed{Conversely,} let $X$ be an MCS in $\mathcal S(\mathcal H)$ \changed{different from $(01)^{\size{V}-1}$.} %distinct to a maximal common subsequence $(01)^{n-1}$.
    \changed{By \Cref{obs:11}, $X$ does not contain $11$ as a substring, so it has to contain $\size{V}$ occurrences of $0$, otherwise it would be a subsequence of $(01)^{\size{V}-1}$.}
%    Since $X$ is a subsequence of $S_0 = (01)^n$ and the occurrence of $0$ is $n$,
%    Otherwise, $X$ is contained in $(01)^{n-1}$ as a subsequence from \Cref{obs:11}.
    Therefore, $X$ can be decomposed by $X_1 \ldots X_{\size{V}}$, where each $X_i$ is either $0$ or $01$.
    We define a set of vertices $\psi^{-1}(X)$ as follows:
    $X_i = 01$ if and only if $\psi^{-1}(X)$ contains $i$.
    % We show that $\phi^{-1}(X)$ contains no hyperedges by contradiction.
    Let $E_i = \set{u_1, \ldots, u_h}$ be a hyperedge in $\mathcal E$.
    By \Cref{lemma:forbid} \changed{and by transitivity of the subsequence relation}, $X$ does not contain $R(E_i)$ as a subsequence.
    Therefore, $\psi^{-1}(X)$ does not contain $E_i$ and $\psi^{-1}(X)$ is an independent set of $\mathcal H$.
    Finally, suppose by contradiction $\psi^{-1}(X)$ is not a maximal independent set. 
    Then, by construction, the string $\psi(\psi^{-1}(X))$ is a supersequence of $X$ and $\psi(\psi^{-1}(X))$ a common subsequence of $\mathcal S(\mathcal H)$. 
    But this contradicts the maximality of $X$.
\end{proof}

% The above lemma shows that there is a bijection between $\mathcal X \setminus \set{(01)^{\size{V}-1}}$ and the set of maximal independent sets in $\mathcal H$.

The reduction presented in this section, together with the previous lemma, concludes the proof of \Cref{thm:mis-reduction}:
\misReduction*

This bijection implies that
if there is an output-polynomial time algorithm for
\MCS{} for binary strings, then an output-polynomial time algorithm for \MIS{} and the following corollary holds.
\hardnessMCSBin*
% \begin{theorem}
%     If there is an output-polynomial time algorithm for \MCS{} for binary strings, 
%     \MIS{} can be solved in output-polynomial time.
% \end{theorem}

This result implies that output-polynomial time enumeration of \MCS{} for binary strings is \changed{challenging}, since output-polynomial time enumeration of \MIS{} is a long-standing open problem.
Moreover, since we give a bijection between the \changed{collection} of maximal independent sets and the set of MCSs in binary strings, 
counting MCSs is at least as hard as counting the maximal independent sets in hypergraphs: this latter problem is in fact \#\P-complete not just for hypergraphs, but even in the special case of simple graphs, which can be modeled as hypergraphs where all hyperedges have size 2 \cite{doi:10.1137/S0097539797321602}. Thus, we obtain the following result as well:
\hardnessMCScount*

This also immediately implies that there is no assessment algorithm running in time polynomial in the input size, unless $\P = \NP$: using such an assessment algorithm in a binary search manner on $z$, we could obtain the exact number of MCSs with at most $\size{V}$ calls to the algorithm, as the number of maximal independent sets is at most $2^{\size{V}}$. 
On the other hand, an efficient assessment algorithm, in the sense of running time polynomial in the input and in $z$, is an open problem, both for general $z$ and for small (polynomial) values of $z$.

% Since the counting of minimal vertex covers is \#\P-complete ~\cite{doi:10.1137/0208032}, our reduction shows the \NP-completeness of \textsc{MCS Assessment} for binary strings.
% By using an algorithm for \textsc{MCS Assessment} at most $\size{V}$ times in a binary search manner,
% we obtain the number of maximal independent sets in a hypergraph
% since the number of maximal independent sets is at most $2^{\size{V}}$.

% \hardnessMCSASSESSBinOLD*

\section{Positive results for parameterized cases} 
\label{section:parameterized}

While we have shown \MCS~to be hard in general, it is worth investigating what parameters can make the problem tractable.
In this section, we present some positive results concerning the parameterized complexity of \MCS{} with respect to the string length~$n$, and the number of input strings $k$.

To look at a related problem, we can observe how the longest common subsequence between $k$ strings of length $n$ cannot be found in $O(n^{k-\epsilon})$, for any $\epsilon>0$, unless the Strong Exponential Time Hypothesis (SETH) is false \cite{AbboudLCShardness}, but this bound becomes tractable for constant values of $k$. Do MCS present a similar situation? 
% Finding the longest common subsequence between $k$ strings of length $n$ has been shown to be \NP-Hard \cite{maier1978complexity}, and specifically not solvable in $O(n^{k-\epsilon})$, for any $\epsilon>0$, unless the Strong Exponential Time Hypothesis (SETH) is false \cite{AbboudLCShardness}. 

% As LCSs are among the set of MCSs, this implies that we cannot perform 
% % complexity gives us a total-time lower bound for 
% enumeration for MCS in time polynomial in the input size. Since we have such exponential bounds for generic $n$ and $k$, it is natural to wonder what can be done when taking them as parameters, which is what we consider in this section.

We note that all the positive results presented here are also valid for the \MCSAS{}, because of its relationship with \MCS{} described in \Cref{section:contribution}. Furthermore, since by enumerating we are also implicitly counting, we also derive the positive results shown in Table~\ref{tab:neg:results} for the \textsc{MCS Counting} and \textsc{MCS Assessment} problems. 

% The trivial adaptation of the dynamic programming gives us a data structure in $O(kn^k)$ time and space \cite{irving1992two} that can efficiently enumerate LCSs. 

% In this section we explore the parameterized complexity of MCS enumeration with respect to various parameters, motivated by the fact that MCS enumeration for multiple strings seems hard even for an alphabet of size two.

% Maier 78 : k-LCS hard (k generico, sigma binario) \cite{maier1978complexity}

% Blin et al 2012: simpler proof

% Irving Fraser 92 : k-LCS in $O(kn^k)$ \cite{irving1992two}

% Abboud : 2-LCS in $O(n^2-eps)$ contradicts SETH, also $k$-LCS in $O(n^k-eps)$ contradicts SETH
% \cite{AbboudLCShardness} (simultaneously \cite{BringmannLCShardness})

In the following, we will consider how the problem complexity behaves if we consider parts of the input size as parameters.
To correctly classify the complexity, we introduce two complexity classes for parametrized problems.
We say that a problem is \emph{\FPT} with respect to a parameter $p$ if it can be solved in $O(f(p) |\mathcal{J}|^{c})$ time for some constant $c$ and computable functions~$f$~\cite{downey1997parameterized}, where $|\mathcal{J}|$ is the non-parametrized size of the input.
Moreover, we say that a parameterized problem
% cost function 
is in \emph{\XP} with respect to a parameter $p$ if it can be solved in $O(f(p) |\mathcal{J}|^{g(p)})$ time for some computable functions $f,g$ \cite{downey1997parameterized}. %, where $|x|$ is the size of the input.
Clearly, all problems that are \FPT~are also in \XP.
For an enumeration algorithm, we say it is \emph{\XP-delay} for parameter $p$ if its delay is bounded by $O(f(p) |\mathcal{J}|^{g(p)})$ for some computable functions $f,g$.
% bounded by a function in \XP.

Before presenting the results, we remark that we can check whether a given string is an MCS of a set $\mathcal{S}$ of strings in $O(k\isize\log(\isize))$ time by using the algorithm of~\cite{hirota2023fast}: indeed, what they denote as ``total length of the strings'' corresponds to what we denote $\isize$, and their $m$ is our $k$. 

\subsection{\MCS{} parameterized by minimum string length}
\label{section:param-string-len}
Consider a set of $k$ strings $\mathcal{S} = \{S_1,\ldots,S_k\}$, and let $\ml$ be the \textit{minimum} of their lengths (clearly, $\ml \le n$). Without loss of generality, assume $\ml$ is the length of $S_1$. 
% We discussed how we cannot perform MCS enumeration in less than $O(n^k)$ time, which is in principle still exponential. 
% At the same time, the reduction from 3-SAT given in Section~\ref{section:hardness-another} does not hold in this case.
% However, 
We show here how MCS enumeration can be efficiently parameterized in $\ml$.

% Let us now then study MCS enumeration when $\ml$ is considered as a parameter. 
Firstly, observe that any common subsequence $X$ of $\mathcal S$ is a subsequence of the shortest string $S_1$. Consequently, we can encode each common subsequence of $\mathcal{S}$ (even non-maximal ones) as a binary string of length $\ml$, through one of its occurrences in $S_1$: it suffices to indicate for each position of $S_1$ whether we take, or discard, the corresponding character. 
% Each such occurrence can thus be encoded as a binary string of length $\ml$. 

The number of such encodings, and consequently the number of CS and MCS of $\mathcal{S}$, is then surely bounded by $2^{|S_1|} = 2^{\ml}$. Since we can check if a given string is an MCS of $\mathcal{S}$ in %$O(k^2 n \log(kn))$ 
$O(k\isize\log(\isize))$ time, we can naively enumerate all MCSs by iterating over all possible subsequences of $S_1$, and discarding those that do not pass the maximality check, in %$O(k^2 n \log(kn)\cdot 2^{\ml})$
$O(k\isize \log(\isize)\cdot 2^{\ml})$ time, which is significant for small values of $\ml$. 

Indeed when $\ml \le c\log(\isize)$, for some constant $c$, the bound reduces to a polynomial-time algorithm, running in %$O(k^{2+c} n^{1+c} \log(kn))$
$O(k\isize^{1+c} \log(\isize))$ time.
Analogously, when %$\ml \le c\log^d(nk)$ 
$\ml \le c\log^d(\isize)$ for some constants $c$ and $d$, we obtain a quasi-polynomial-time algorithm, running in $O\left(k\isize \log(\isize)2^{c\log^d(\isize)}\right)$ %$O(k^{2} n \log(kn)2^{c\log^d(nk)})$
time. We can thus derive the following.
%$O(k^{2+c\log^{d-1}(nk)}n^{1+c\log^{d-1}(nk)}\log (kn))$

\paramMinLen*
% In summary, if the minimum length $\ml$ is logarithmic in the input size we obtain a polynomial time enumeration algorithm, while if $\ml$ is polylogarithmic we instead obtain a quasi-polynomial time algorithm.  

Finally, we note that the value of parameter \ml can sometimes be reduced, obtaining more tractable instances, due to the following observation:
\begin{observation}
    A character occurring in a common subsequence occurs in every input string. Therefore, a character not occurring in every input string can be deleted from the input without altering the set of MCSs. 
\end{observation}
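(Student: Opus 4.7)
The plan is to establish the observation in two short steps, corresponding to the two sentences of the statement. The main tool is the definition of subsequence via arrangements $(\delta_1, \ldots, \delta_{|X|})$ introduced in the Preliminaries.

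First, I would prove the first sentence directly. Let $X$ be a common subsequence of $\mathcal{S}$, and let $c$ be a character appearing in $X$, say at position $j$, so $X[j] = c$. By definition of common subsequence, for each $S_i \in \mathcal{S}$ there is an arrangement $(\delta_1, \ldots, \delta_{|X|})$ with $S_i[\delta_r] = X[r]$ for every $r$; in particular $S_i[\delta_j] = c$, so $c$ appears in $S_i$. Since $i$ was arbitrary, $c$ appears in every input string.

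Second, I would use the contrapositive of the first sentence to justify deletion. Let $c$ be a character not appearing in some $S_j \in \mathcal{S}$, and let $\mathcal{S}'$ be the set obtained from $\mathcal{S}$ by removing every occurrence of $c$ in every string. I would show the sets of common subsequences of $\mathcal{S}$ and of $\mathcal{S}'$ coincide, from which equality of MCS sets follows immediately since maximality is defined via the same subsequence order on the same set of common subsequences. For the forward direction, any common subsequence $X$ of $\mathcal{S}$ does not contain $c$ by the contrapositive of sentence one; hence for each $S_i$, an arrangement of $X$ in $S_i$ never uses a position holding $c$, and after deleting those positions and reindexing, one obtains an arrangement of $X$ in $S_i'$. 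The reverse direction is trivial since $S_i'$ is itself a subsequence of $S_i$, so every subsequence of $S_i'$ is a subsequence of $S_i$.

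There is essentially no obstacle: the only subtlety is confirming that the set of MCSs is preserved (not merely the set of common subsequences), but this follows because maximality is a property of the inclusion order on the common subsequence set, which is unchanged by the deletion. A brief remark on this point completes the argument.
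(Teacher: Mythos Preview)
Your proposal is correct and follows the only natural line of reasoning: first observe that any character of a common subsequence must appear in every $S_i$ via the arrangement, then conclude that deleting a globally-absent character leaves the set of common subsequences---and hence the set of MCSs---unchanged. The paper itself gives no proof of this observation, treating it as self-evident; your argument is exactly the routine justification the authors leave implicit, including the small remark that maximality is preserved because it is defined purely in terms of the (unchanged) set of common subsequences.
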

It follows that we can preprocess any instance $\mathcal{S} = \{S_1,\ldots,S_k\}$ by removing every occurrence of such characters, possibly reducing the minimum string length $\ml$. We can also observe that, in this preprocessed instance, $|\Sigma| \le \ml$ because any remaining character must occur in the shortest string.

\subsection{\MCS{} parameterized by the number of strings}
\label{section:param-num-strings}
Consider now the other potential parameter: the number $k$ of strings in the input instance $\mathcal{S} = \{S_1, \ldots, S_k\}$, which we here assume to be constant. We show that the enumeration algorithm in \cite{DBLP:journals/algorithmica/ConteGPU22} can be generalized to handle any number $k$ of strings of maximum length $n$ in $O\left(kn^{k+1}\left ( n^k + kn\log(kn)\right )\right)$ delay. 
% We note that the reduction presented in Section~\ref{section:hardness-another} does not work when $k = O(\log n)$: the former corresponds to the number of clauses and the latter to the number of variables of the 3SAT instance, and this restrictions yield a polynomial-time solvable instance of SAT.  
% In this section, we show that MCS enumeration is XP-total time when parameterized by the number of strings $k$. 
In this section, we give a brief sketch of the generalization. 

Recall that we can check for maximality of a given subsequence in $O(k\isize\log(\isize))$ time. % $O(k^2 n \log(kn))$ time. 
The algorithm in \cite{DBLP:journals/algorithmica/ConteGPU22} uses two main concepts that need to be generalized to $k$ strings: the set of unshiftables $\mathcal{U}$, given by pairs of matching positions in the two strings that satisfy further specific properties, and, for every valid prefix $P$ of an MCS, the set of valid candidates for extension $Ext_P\subseteq \mathcal U$. % formed by pairs corresponding to characters $c$ such that $P\cdot c$ is still a valid prefix.

The former can be easily generalized \changed{to $k$ strings, where each unshiftable becomes a $k$-tuple of matching positions with some constraints, hence $|\mathcal U| \le n^k$.} \changed{In essence, a $k$-tuple $u = (u_1, \ldots, u_k)$, corresponding to character $c(u):=S_i[u_i]$, is unshiftable if there is another unshiftable $v = (v_1, \ldots, v_k)$ (or, as a base case, $v=(|S_1|+1,\ldots,|S_k|+1)$) such that, for each $i$, $u_i$ is the rightmost occurrence of $c(u)$ in $S[1,v_i-1]$ (i.e. before $v_i$ in $S_i$).}
%the character $S_i[u_i]$ associated with $u$, which we denote $c(u)$, does not appear in $S_i[u_i+1,v_i-1]$, for any $1 \le i \le k$.}
%In essence, a $k$-tuple $u = (u_1, \ldots , u_k)$ is unshiftable if, taking the suffixes of each string $S_i\in \mathcal{S}$ beginning at position $u_i$, the resulting set
%contains an MCS that begins with character $S_i[u_i]$
%has an MCS with a matching starting in $u$.
%%
%Note that each unshiftable $u = (u_1, \ldots, u_k)$ has an associated character $c(u) = S_1[u_1] = \ldots = S_k[u_k]$. 
We can trivially extend the characterization of \cite[Fact 1]{DBLP:journals/algorithmica/ConteGPU22} to give a constructive recursive definition of $\mathcal U$, %: we proceed recursively from the end to the beginning of the strings, taking the $k$-tuple of the closest preceding %maximum non-dominated 
%occurrences of every character in the strings, that have no other unshiftable in-between. 
running in $O(k|\Sigma|n^k)$. % per step, yielding a set of size $|\mathcal{U}| \le n^k$, instead of quadratic.

Secondly, consider $Ext_P$: for a given string $P$ that is known to be a prefix of an MCS, the set $Ext_P$ is a set of unshiftables corresponding to candidate extensions of $P$. Being able to compute $Ext_P$ (and refine it with maximality checks), immediately gives us a backtracking enumeration algorithm for MCSs, as we can find every $c$ such that $P\cdot c$ is still a prefix of an MCS. Again, we can extend its definition of $Ext_P$ from \cite{DBLP:journals/algorithmica/ConteGPU22} to the case of $k$ strings:
\cite[Definitions 7-8]{DBLP:journals/algorithmica/ConteGPU22} define a set of unshiftable $2$-tuples immediately following $P$ that are ``not dominated'' (i.e., no other unshiftable occurs completely in between them and $P$), and this can be trivially considered with $k$-tuples on $k$ strings instead of $2$. Even if some of the more refined optimizations in \cite{DBLP:journals/algorithmica/ConteGPU22} do not immediately extend to the case of $k$ strings, the $Ext_P$ set can still be naively computed in $O(kn^{2k})$ time by pairwise checking all the elements in $\mathcal U$, and eliminating dominated ones. %Such comparison requires $O(k)$ time. 

\begin{algorithm}[t]
\caption{\textbf{ (\XP-Delay Enumeration Algorithm for MCS) }}
\label{alg:mcs-enum-kstrings}
\KwIn{\emph{$\mathcal{S} = \{S_1, \ldots, S_k\}$ }} 
\KwOut{\emph{$MCS(\mathcal{S})$}}
\SetAlgoLined\DontPrintSemicolon
  \SetKwFunction{enum}{EnumerateMCS}
  \SetKwFunction{unshift}{FindUnshiftables}
\SetKwFunction{bpart}{BinaryPartition}

\SetKwProg{myalg}{Algorithm}{}{}
\myalg{\enum{$\mathcal{S}$}}{
        %Let $\#,\$\not\in \Sigma$; add them resp. at the beginning and end of each $S_i$\;
        Let $\#,\$\not\in \Sigma$; add them resp. at positions $0$ and $|S_i|+1$ of each $S_i$\;
        $\mathcal U$ =  \unshift{$(|S_1|+1,...,|S_k|+1)$}{}\;
        \bpart{$\#$, $(0,...,0)$}\;
    }
    \vspace{4mm}
% \;
  \KwIn{\emph{a valid prefix $P$, the indices of the shortest prefixes of $S_1,...,S_k$ containing $P$}} 
\KwOut{\emph{all strings in $MCS(\mathcal{S})$ having $P$ as prefix}}
  \SetKwProg{myproc}{Procedure}{}{}
  \myproc{\bpart{$P$, $(\ell_1,...,\ell_k)$}}{
        Let $\mathcal U_P = \{(u_1,...,u_k)\in \mathcal{U} \ |\ u_i > \ell_i \ \forall i \}$ \;
        Compute $Ext_P$ \label{ln:findext} by eliminating dominated elements from $\mathcal U_P$ \;
        \eIf{$Ext_P = \emptyset$} {\textbf{output} $P$}{
            \For{each $c \in \Sigma $ corresponding to some $(u_1,...,u_k)\in Ext_P$}{
                \If {$P\in MCS(S_1[0,u_1],..., S_k[0,u_k])$} {
                %let $(l,m)$ be the last edge of $L_P$\;
                For each $j$, let $i_j$ be the first occurrence of $c$ after $\ell_j$ in $S_j$\;
                \bpart{$P\, c$, $(i_1,...,i_k)$}\;
                }
            }
        }
  }
\vspace{4mm}
 % \;
 \KwIn{\emph{An unshiftable %$k$-tuple of matching positions 
 $(i_1,...,i_k)$ of strings $S_1,...,S_k$}} 
\KwOut{\emph{The set of unshiftable $k$-tuples of strings $S_1[1,i_1-1],..., S_k[1,i_k-1]$}}
\SetKwProg{myunshift}{Function}{}{}
\myunshift{\unshift{$(i_1,...,i_k)$}}{
    Let $\mathcal U = \emptyset$ \;
    \For{$c \in \Sigma$}{
        For each $j$, let $r_j$ be the rightmost occurrence of $c$ in $S_j[1,i_j-1]$ \; 
        \If {$r_j\not =-1$ for all $j$, and $(r_1,...,r_k) \not \in \mathcal U$}{
            Add $(r_1,...,r_k)$ to $\mathcal{U}$\;
            $\mathcal U = \mathcal U \ \cup $ \unshift{$(r_1,...,r_k)$}
        }
    }
    \Return $\mathcal{U}$
    %         \State 
    %         \State $l_Y \gets $ the rightmost occurrence of $c$ in $Y_{<j}$ 
    %         \If {$l_X\not =-1$ and $l_Y \not = -1$ and $(l_X,l_Y) \not \in \mathcal U$\label{ln:unshiftable-if}}
    %             \State add $(l_X,l_Y)$ to $\mathcal U$ 
    %             \State $\mathcal U = \mathcal U \ \cup $ \Call{FindUnshiftables}{$(l_X,l_Y)$}
    %         \EndIf
    %     \EndFor
    %     \State \textbf{output} $\mathcal U$
  }

\end{algorithm}

With these definitions, the characterization given by \cite[Theorem 3]{DBLP:journals/algorithmica/ConteGPU22} generalizes as well, yielding the following binary partition enumeration algorithm (see Algorithm~\ref{alg:mcs-enum-kstrings}): 
% Overall, given these two concepts we can immediately generalize the binary partition enumeration algorithm to $k$ strings as follows. 
first, we compute the set $\mathcal U$ in $O(k|\Sigma|^2 n^k)$ time. Then, in each binary partition recursive call, associated with a given valid prefix $P$, we (1) check if $P$ is an MCS, and in that case output $P$; otherwise (2) we compute the $Ext_P$ set in $O(kn^{2k})$ time, and then check for each of the $O(n^k)$ elements $u = (u_1,\ldots,u_k) \in Ext_P$ whether $P$ is an MCS of $\{S_1[1,u_1-1],\ldots,S_k[1,u_k-1]\}$. If this is the case, we recurse on prefix $P\cdot c(u)$. Therefore, since the binary partition tree depth is $O(n)$, the delay of such an algorithm is bounded by $O\left (n \left (k\isize\log(\isize) + kn^{2k} +  kn^k\isize\log(\isize) \right )\right )=O\left (kn^{2k+1} + kn^{k+1}\isize\log(\isize)\right)$ time, and since $k\ge 2$ the factor $\isize\log(\isize) \le kn\log(kn)$ is subsumed, as well as the $O(k|\Sigma|^2 n^k)$ preprocessing time. We can thus conclude:

%n
%kn^{2k} <-calcola ext_P
%(n^k) k^2nlog(kn) <- controlla che P e' mcs fino al prefisso
%
%n*(kn^{2k}+(n^k) k^2nlog(kn) )
%
%
%k^2nlog(kn) <- caso 1 P e' MCS?

\paramk*

It is worth remarking that the space requirement is $O(n^k)$ due to storing the unshiftables, and that the above complexity can be given in the standard RAM model because of our assumption on $k$: the natural choice of word size is $O(k \log n)$ bits to index the unshiftables data structure, and the word size should be logarithmic in the input size, which is $\isize = O(nk)$. We observe how, since $k$ is a constant, this requirement is met. However, further efforts to develop an enumeration algorithm for non-constant values of $k$ should take this issue into account.

\section{Conclusion}

In this paper we studied the complexity of enumerating Maximal Common Subsequences (MCS), a special case of Maximal Frequent Subsequence Mining, and some related problems such as counting and assessing the number of MCSs.
We showed that, in general, \MCS{} is intractable, as no algorithm can enumerate all MCSs in polynomial time with respect to the combined size of the input and the output, unless $\P = \NP$. 
%in general the problem is untractable, in the sense that there cannot exist any enumeration algorithm that takes time polynomial in the combined size of the input and of the output, unless $\P = \NP$.
This result is based on a reduction from 3-\SAT{} to a decision problem that, given a set of input strings and a set $\mathcal{Z}$ of MCSs, asks whether there is another MCS not included in $\mathcal{Z}$.

We also show that no output-polynomial algorithms exist for constructing polynomial-sized indexes that store all MCSs and allow to efficiently query any specific solution. Furthermore, we prove that MCS assessment, where we are asked to decide whether a given instance has at least $z$ solutions, is \NP-hard even when $z$ is linear in the input size. 
%We also analyze various parametrizations: in the case of binary strings we show that \MCS{} is as hard as Maximal Independent Set Enumeration on Hypergraphs;
%for the case of bounded string lengths and bounded number of strings, we provide some positive results. In the former case, there is an \FPT algorithm for solving all studied problems, while in the latter, we show there is an \XP-delay algorithm for solving \MCS{}.
%We leave open whether there can be an efficient algorithm for \MCSC{} and \textsc{MCS Indexing} when the number of strings is a parameter.

In light of these negative results, we explore potential parametrizations for the studied problems. 
In the case of binary strings, we construct a one-to-one reduction between the set of Maximal Independent Sets (MIS) in hypergraphs and the set of MCSs of a particular instance.
This allows us to conclude that counting MCSs is \#\P-Complete, and
that \MCS{} is as hard as \MIS{} on hypergraphs, for which the existence of an efficient algorithm is a long-standing open problem.
On the positive side, for bounded string lengths we identify an \FPT~algorithm for solving all the problems. When the number of strings is instead bounded, we provide an \XP-delay algorithm for \MCS{}. 

However, it remains open whether efficient algorithms exist for \MCSC{} or \textsc{MCS indexing} when the number of strings is treated as a parameter, as well as whether 
we can perform efficient assessment (for $z$ polynomial in the input size) in the case of binary strings.

\section*{Acknowledgments}
% \todo[inline]{Funding, useful discussions?}
We wish to thank Roberto Grossi and Rossano Venturini for useful conversations.
This work is partially supported by JSPS KAKENHI Grant Numbers 
JP21K17812, % (Kazuhiro, wakate)
JP22H03549, % (Kazuhiro, kiban B(Takashi Horiyama))
JP23K28034, % (Yasuaki, kiban B)
JP24H00686, % (Yasuaki, kiban A(Takehito Ito))
JP24H00697, and % (Yasuaki, kiban A(Hisao Tamaki))
JP25K21273, % (Kazuhiro, wakate)
and JST ACT-X Grant Number JPMJAX2105. %(Kazuhiro, ACT-X)
Author GP is supported by the Italian Ministry of Research, under the complementary 36 actions to the NRRP ``Fit4MedRob - Fit for Medical Robotics'' Grant (\#PNC0000007).
% \todo[inline]{Authors GB and AC supported by prin 2022?}
Authors GB and AC are partially supported by MUR PRIN 2022 project EXPAND: scalable algorithms for EXPloratory Analyses of heterogeneous and dynamic Networked Data (\#2022TS4Y3N). 

\bibliographystyle{abbrv}% the recommended bibstyle
\bibliography{main.bib}

\end{document}